\documentclass[journal]{IEEEtran}

\ifx\pdfoutput\undefined
\usepackage{graphicx} \else
\usepackage[pdftex]{graphicx} \fi
\usepackage{subfigure}
\usepackage{amsmath}
\usepackage{dsfont}
\usepackage{amssymb}
\usepackage{algorithm}
\usepackage{subfigure}
\usepackage{cuted}
\usepackage{epstopdf}
\usepackage{cite}
\usepackage{soul,color,bm}
\usepackage{setspace}
\usepackage{amsbsy}

\newtheorem{pro}{Proposition}
\newtheorem{remark}{Remark}

\usepackage{setspace}
\usepackage{lipsum} 
\usepackage{multicol}
\usepackage{float}
\usepackage{graphicx}

\usepackage{etoolbox}
\makeatletter
\patchcmd{\@begintheorem}{\textit}{\mathbf}{}{}
\makeatother
\ifCLASSINFOpdf
\else
\fi

\begin{document}

\title{Rethinking Joint UAV Placement and Beamforming: A Correlation-Aware Geometric Approach}

\author{Chaeyeon Kim and Kisong Lee,~\IEEEmembership{Senior Member,~IEEE}


\thanks{The authors are with the Department of Information and Communication Engineering, Dongguk University, Seoul 04620, South Korea (e-mail: kslee851105@gmail.com).}

}

\maketitle

\begin{abstract}
In multiuser unmanned aerial vehicle (UAV)-assisted downlink communications, UAV placement and transmit beamforming are inherently coupled through the propagation geometry. However, fully joint design based on instantaneous channel state information (CSI) is impractical, because the small-scale fading depends on the UAV location to be optimized and thus is unavailable a priori. Moreover, existing joint placement and beamforming methods do not explicitly optimize the UAV position with respect to the geometry-dependent multiuser interference induced by inter-user steering correlation. To address this issue, we propose a correlation-aware geometric framework for joint UAV placement and beamforming. Specifically, the UAV position is first optimized based on long-term channel statistics, where the steering-vector correlation is incorporated into the placement design through a conservative Gaussian surrogate that avoids interference underestimation. The resulting nonconvex positioning problem is then handled using successive convex approximation, auxiliary-variable decoupling, and quadratic transform techniques. For the obtained UAV location, the transmit beamformer is then optimized using instantaneous CSI. Simulation results show that the proposed framework significantly improves the minimum user spectral efficiency by enhancing angular separability among users and reducing inter-user interference. These results demonstrate that UAV placement should be designed not only for desired-link enhancement but also for interference mitigation through geometry-aware user separation.
\end{abstract}

\begin{IEEEkeywords}
Unmanned aerial vehicle, positioning, beamforming, steering-vector correlation, convex optimization. 
\end{IEEEkeywords}

\IEEEpeerreviewmaketitle

\vspace{-2mm}
\section{Introduction}

Unmanned aerial vehicles (UAVs) have emerged as a promising platform for next-generation wireless communications due to their high mobility, flexible deployment, and favorable air-to-ground propagation characteristics \cite{Geraci22,Zhou23}. Unlike conventional terrestrial infrastructure, UAVs can be rapidly dispatched and repositioned according to traffic demand, service priority, or environmental changes, making them particularly attractive for temporary hotspot coverage, emergency connectivity, data collection, and relay-assisted communications \cite{Lin2018}. More importantly, the three-dimensional (3D) mobility of UAVs enables active control over communication geometry and propagation conditions, so that the UAV location becomes not merely an operational parameter but a fundamental design variable that directly affects link quality, interference, and network performance. This unique controllability has made UAV deployment and trajectory optimization central research problems in UAV-enabled wireless systems.

Motivated by these advantages, prior work on UAV-enabled wireless communications has investigated a wide range of design problems, including deployment optimization, trajectory design, user association, and radio resource allocation. In many of these studies, UAV mobility is primarily exploited to improve large-scale propagation conditions, such as path-loss reduction, line-of-sight (LoS) enhancement, and coverage extension. Accordingly, UAV positioning and trajectory design have often been developed based on distance- or elevation-angle-dependent channel models, forming an important foundation for UAV communication design. In particular, \cite{Mozaffari16,Fan18} considered UAV placement optimization with the UAV serving as a quasi-static base station to enlarge the service coverage region. From the spectral-efficiency (SE) perspective, \cite{You20,Heo2024-2} optimized UAV trajectories and communication resources to improve throughput or data-harvesting performance. From the energy-efficiency (EE) perspective, UAV mobility was exploited to enhance the overall EE of UAV-enabled communications through joint trajectory and transmission optimization while accounting for propulsion-related energy consumption and, in some cases, energy-harvesting considerations \cite{Zeng17,Zeng19,Yang20}. Related efforts also considered wireless-powered networks, in which UAVs deliver wireless energy to ground nodes (GNs) with energy-harvesting capability while jointly coordinating power transfer and information transmission \cite{Heo24,Park24,Park26}. In addition, multi-UAV systems were investigated to improve coverage and scalability, with particular emphasis on interference-aware trajectory and resource design \cite{Wu18,Kim24,Kim25}. UAV maneuverability was also exploited for secure communications, including anti-jamming trajectory design and cooperative-jamming-based secrecy enhancement \cite{Lee18,Duo20,Heo2024-3}.

However, as UAV-enabled wireless networks increasingly adopt multi-antenna transmission and spatially selective communications, large-scale-channel-oriented designs are often insufficient to fully capture the role of UAV positioning. In particular, the UAV location affects not only the link strength but also the spatial structure of the multiuser channel, including the angular separability among users, the steering-vector correlation, and the resulting inter-user interference pattern. Consequently, UAV positioning should be designed not only for desired-link enhancement but also for interference management through propagation geometry. Motivated by this broader role of UAV positioning, several recent studies have investigated joint UAV positioning and beamforming in various settings, although mostly in forms that do not explicitly capture the geometry-dependent interference structure induced by inter-user steering correlation. For example, \cite{Won24} studied the joint optimization of aerial base-station location, beam direction, beamwidth, and radio-resource allocation for sum-rate maximization in UAV networks with controllable directional antennas. In the context of mmWave UAV communications, \cite{Yu22} investigated the joint design of UAV positioning, beamforming, and power allocation for EE maximization in multiuser UAV-aided systems. In addition, \cite{Zhu22} considered multi-UAV networks and jointly optimized UAV positioning, user clustering, and hybrid beamforming to maximize the achievable sum rate under minimum-rate constraints, while \cite{Liu23} studied UAV deployment and robust hybrid beamforming under UAV jitter for max--min rate enhancement. In multiuser relay networks, \cite{Chou25} addressed the joint design of UAV relay placement, beamforming, and receive combining to improve the minimum user signal-to-interference-plus-noise ratio (SINR). Furthermore, \cite{Xu25} investigated joint UAV placement and beamforming in a multistage integrated sensing and communication framework with wide-beam sensing and narrow-beam localization. 

Despite this progress, important limitations remain. In particular, the designs in \cite{Yu22,Zhu22,Liu23,Chou25} optimize UAV deployment mainly from the perspective of desired-signal enhancement, without incorporating inter-user steering correlation into the placement stage. As a result, the UAV position is not directly optimized for the resulting interference structure. Meanwhile, the framework in \cite{Xu25} refines UAV placement via first-order Taylor-based successive convex approximation (SCA) applied to angle-dependent channel expressions. Although this captures angle variations caused by UAV movement, it relies on local linearization and does not explicitly model the steering-correlation structure governing multiuser interference. Motivated by this gap, we reconsider joint UAV placement and beamforming for multiuser downlink communications by incorporating steering-vector correlation into the placement design. This enables the UAV location to be optimized not only for desired-link enhancement but also for interference mitigation through improved angular user separability. The transmit beamformer is then optimized for the obtained UAV position using instantaneous channel state information (CSI) fed back from the GNs. The contributions of our work are summarized as follows:
\begin{itemize}
    \item We identify a key limitation of existing joint UAV placement and beamforming designs: prior approaches do not explicitly incorporate the steering-vector correlation structure governing geometry-dependent multiuser interference into the placement stage; instead, they either optimize UAV placement primarily for desired-signal enhancement or treat angular variations only through local linearization \cite{Yu22,Zhu22,Liu23,Chou25,Xu25}. To the best of our knowledge, this work is the first to embed this structure into UAV positioning, thereby optimizing the UAV location not only for desired-link enhancement but also for interference mitigation through improved angular user separability.

    \item Building on this viewpoint, we develop a tractable two-stage optimization framework. For the positioning stage based on long-term channel statistics, we construct a safeguarded Gaussian surrogate that preserves the dominant peak structure of the steering correlation while avoiding sidelobe underestimation, and use it to derive a conservative SE formulation. The resulting nonconvex positioning problem is then handled using SCA, auxiliary-variable decoupling, and quadratic transform. For the obtained UAV location, the beamformer is optimized using instantaneous CSI. The overall algorithm admits convergent iterative updates with polynomial-time complexity.

    \item Simulation results show that the proposed framework significantly improves the minimum user SE by enlarging angular separation among users and thereby reducing inter-user interference. More importantly, the results reveal that, in multiuser downlink systems, the quality of UAV placement depends not only on link-strength improvement but also on its ability to create a more spatially separable multiuser channel.
\end{itemize}

The remainder of this paper is organized as follows. Section II introduces the system model and formulates the joint UAV placement and beamforming problem for multiuser downlink communications. Section III incorporates the structured steering-vector correlation into the UAV placement problem and proposes a two-stage optimization framework for solving the joint design problem. Section IV presents simulation results and discussions, and Section V concludes the paper.

\vspace{-2mm}
\section{System Model and Problem Statement}

\begin{figure}[t]
    \centerline{\includegraphics[width=0.8\linewidth]{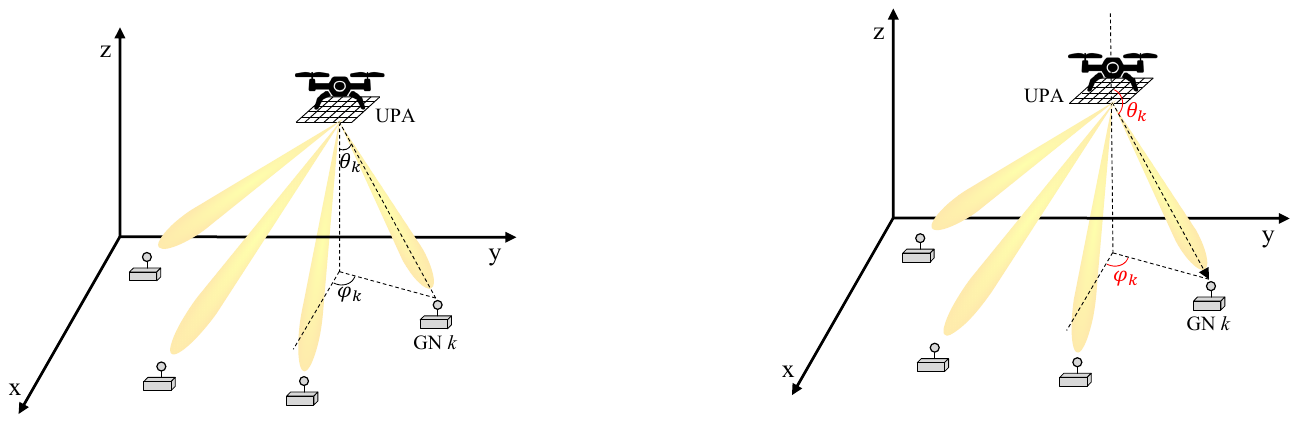}} \vspace{-2mm}
    \caption{A multiuser UAV-assisted communication system.} 
    \label{fig1} \vspace{-4mm}
\end{figure}

As shown in Fig. \ref{fig1}, we consider a multiuser UAV-assisted downlink communication system, where a UAV equipped with a uniform planar array (UPA) consisting of $N_x \!\times\! N_y$ antennas serves $K$ single-antenna GNs, indexed by $k\!\in\!\mathcal{K}\!=\!\{1,\dots,K\}$. The 3D location of the UAV is denoted by $\mathbf{q}\!=\![x,y,z]^T$, while the position of GN $k$ is given by $\mathbf{u}_k\!=\![x_k,y_k,z_k]^T$. 

Due to the high likelihood of LoS links in UAV communications, the air-to-ground channel is modeled as a Rician fading channel \cite{Chou25}. Specifically, the downlink channel vector from the UAV to GN $k$ is given by
\begin{align}
\mathbf{h}_k=\sqrt{\frac{\beta}{\|\mathbf{q}\!-\!\mathbf{u}_k\|^{\alpha}}} \Bigg( \sqrt{\frac{K_R}{K_R\!+\!1}}\mathbf{h}^{\textrm{L}}_k\!+\!\sqrt{\frac{1}{K_R\!+\!1}}\mathbf{h}^{\textrm{N}}_k\Bigg), \label{cch}
\end{align}
where $\alpha$ and $\beta$ denote the path-loss exponent and the reference channel power gain, respectively, $\|\mathbf{q}\!-\!\mathbf{u}_k\|$ represents the Euclidean distance between the UAV and GN $k$, and $K_R$ denotes the Rician $K$-factor in linear scale. Moreover, the non-line-of-sight (NLoS) component $\mathbf{h}_k^{\textrm{N}} \!\in\! \mathbb{C}^{N_xN_y\times 1}$ captures the small-scale fading caused by multipath scattering, and its elements are assumed to be independent and identically distributed circularly symmetric complex Gaussian random variables with zero mean and unit variance, i.e., $\mathbf{h}_k^{\textrm{N}} \!\sim\! \mathcal{CN}(\mathbf{0},\mathbf{I})$.

The LoS component $\mathbf{h}_k^{\textrm{L}}\!\in\! \mathbb{C}^{N_xN_y\times 1}$ represents the deterministic array response determined by the relative 3D geometry between the UAV and GN $k$. In particular, it depends on the angular direction from the UAV to GN $k$, characterized by the elevation angle $\theta_k$ and azimuth angle $\phi_k$, where $\theta_k$ is the angle between the direction of GN $k$ and the UAV's downward vertical axis, and $\phi_k$ is the corresponding azimuth angle on the horizontal plane. For the considered UPA, the array response is expressed as the Kronecker product of two uniform linear array (ULA) responses along the $x$- and $y$-axes, i.e.,
\begin{align}
\mathbf{h}_k^{\textrm{L}} = \mathbf{a}_k(\theta_k,\phi_k)
=\mathbf{a}_k^x(\theta_k,\phi_k) \otimes \mathbf{a}_k^y(\theta_k,\phi_k),
\end{align}
where $\mathbf{a}_k^x(\theta_k,\phi_k)$ and $\mathbf{a}_k^y(\theta_k,\phi_k)$ are given by
\begin{align}
\mathbf{a}_k^x(\theta_k,\phi_k) &= \left[1,e^{-j\frac{2\pi d}{\lambda}\Omega_k^x},\dots,e^{-j\frac{2\pi d(N_x-1)}{\lambda}\Omega_k^x}\right]^T, \\
\mathbf{a}_k^y(\theta_k,\phi_k) &= \left[1,e^{-j\frac{2\pi d}{\lambda}\Omega_k^y},\dots,e^{-j\frac{2\pi d(N_y-1)}{\lambda}\Omega_k^y}\right]^T.
\end{align}
Here, $\Omega_k^x\!=\!\sin\theta_k\cos\phi_k\!=\!\frac{x_k-x}{\|\mathbf{q}-\mathbf{u}_k\|}$ and 
$\Omega_k^y\!=\!\sin\theta_k\sin\phi_k\!=\!\frac{y_k-y}{\|\mathbf{q}-\mathbf{u}_k\|}$ denote the direction cosines along the $x$- and $y$-axes, respectively. In addition, $\lambda\!=\!\frac{c}{f_c}$ is the carrier wavelength, where $c$ is the speed of light and $f_c$ is the carrier frequency, while $d$ represents the antenna spacing.

Let $\mathbf{s}\!=\![s_1,\dots,s_K]^T \!\in\! \mathbb{C}^{K \times 1}$ denote the data symbol vector intended for the $K$ GNs, where $\mathbb{E}\{\mathbf{s}\}\!=\!\mathbf{0}$ and $\mathbb{E}\{\mathbf{s}\mathbf{s}^H\}\!=\!\mathbf{I}$. The UAV employs linear precoding through the beamforming matrix $\mathbf{W} \!=\! [\mathbf{w}_1,\dots,\mathbf{w}_K] \!\in\! \mathbb{C}^{N_xN_y \times K}$, where $\mathbf{w}_k$ is the beamforming vector for GN $k$. Accordingly, the transmitted signal is given by $\mathbf{x} \!=\! \mathbf{W}\mathbf{s}$. The received signal at GN $k$ is then expressed as
\begin{align}
y_{k} &= \mathbf{h}_k^H \mathbf{x} + n_k \nonumber \\
&= \mathbf{h}_k^H \mathbf{w}_k s_k 
+ \sum_{i\neq k} \mathbf{h}_k^H \mathbf{w}_i s_i
+ n_k, \label{yk}
\end{align}
where $n_k \!\sim\! \mathcal{CN}(0,\sigma^2)$ denotes the additive white Gaussian noise (AWGN) at GN $k$.

In \eqref{yk}, the first term represents the desired signal intended for GN $k$, while the second term accounts for the multiuser interference caused by signals intended for the other GNs. Accordingly, the resulting achievable SE at GN $k$ is given by
\begin{align}
    R_{k}=\log_2\bigg(1+\frac{|\mathbf{h}^{H}_{k}\mathbf{w}_k|^2}{\sum_{i \neq k}|\mathbf{h}^{H}_{k}\mathbf{w}_i|^2+\sigma^2}\bigg). \label{Rk}
\end{align}

Our objective is to jointly optimize the UAV position $\mathbf{q}$ and the beamforming matrix $\mathbf{W}$ to maximize the minimum SE among all GNs under the max--min fairness criterion. By introducing an auxiliary variable $\eta$ to represent the minimum achievable SE, the optimization problem is formulated as
\begin{align} 
\textbf{(P0):} ~\max_{\mathbf{W},~\mathbf{q},~\eta} & ~~~~~~~~ \eta \nonumber  \\
\textrm{subject to} & ~~~~ R_{k} \geq \eta, ~\forall k, \label{constrk} \\
& ~~~~\textrm{tr}(\mathbf{W}\mathbf{W}^H) \leq P_{\textrm{max}}, \label{pmax} \\ 
& ~~~~ \mathbf{q} \in \mathcal{Q}. \label{qmax}
\end{align}
Constraint \eqref{pmax} imposes a total transmit power limit at the UAV, where $\textrm{tr}(\mathbf{W}\mathbf{W}^H)\!=\!\sum_{k=1}^{K}\|\mathbf{w}_k\|^2$ denotes the total transmit power across all antennas and $P_{\textrm{max}}$ is the maximum allowable transmit power. Constraint \eqref{qmax} confines the feasible 3D operating region of the UAV, where $\mathcal{Q} \!\triangleq\! \{\mathbf{q} \!\in\! \mathbb{R}^{3} \big|x_{\textrm{min}}\!\le\! x \!\le\! x_{\textrm{max}},y_{\textrm{min}} \!\le\! y \!\le\! y_{\textrm{max}},z_{\textrm{min}} \!\le\! z \!\le\! z_{\textrm{max}}\}$ with $x_{\textrm{min}},y_{\textrm{min}},z_{\textrm{min}} \!\ge\! 0$.

\vspace{-2mm}
\section{Proposed Optimization Framework}

We propose a two-stage optimization framework that separates UAV positioning and beamforming design. Fully joint optimization based on instantaneous CSI is generally impractical, because the small-scale fading depends on the UAV position to be optimized and therefore cannot be known a priori during the positioning stage. Accordingly, in the first stage, the UAV position is optimized based on long-term channel statistics. In the second stage, once the UAV position is fixed, instantaneous CSI is acquired through feedback, and the beamforming matrix is optimized to further mitigate residual interference and improve system performance.

\vspace{-2mm}
\subsection{UAV Positioning Based on Long-Term Channel Statistics}

In the first stage, we optimize the UAV position based on long-term channel statistics. To this end, maximum ratio transmission (MRT) is adopted as the reference beamforming scheme. Since MRT maximizes each user’s desired signal along its own steering direction without actively nulling inter-user interference, the residual interference is determined largely by the similarity among user steering directions, as quantified by the steering-vector correlation. Consequently, UAV positioning affects the resulting SE not only through path loss but also through steering-vector correlation.

\subsubsection{Steering-Correlation-Aware Surrogate Construction}

Under the MRT-based positioning criterion, the mean channel of GN $k$ is given by
\begin{align}
    \bar{\mathbf{h}}_{k} \triangleq \mathbb{E}[\mathbf{h}_{k}]=\sqrt{\frac{\beta}{\|\mathbf{q}-\mathbf{u}_k\|^{\alpha}}} \sqrt{\frac{K_R}{K_R+1}} \mathbf{h}^{\textrm{L}}_k.
\end{align}
Accordingly, the MRT beamforming vector for GN $k$ is aligned with $\bar{\mathbf{h}}_{k}$ and is represented by
\begin{align}
\mathbf{w}_k 
= \sqrt{p_k}\frac{\bar{\mathbf{h}}_{k}}{\|\bar{\mathbf{h}}_{k}\|}
= \sqrt{p_k}\frac{\mathbf{a}_k(\theta_k,\phi_k)}{\|\mathbf{a}_k(\theta_k,\phi_k)\|}, ~\forall k,
\label{eq:MRT_wk}
\end{align}
where $p_k$ denotes the transmit power allocated to GN $k$. 

Under this reference beamforming choice, the resulting long-term performance depends explicitly on both the desired array gain and the inter-user steering-vector correlation. Hence, the positioning design can exploit not only the distance-dependent path loss but also the angular separation among GNs. To make this dependence explicit, we characterize the average desired signal power and the average inter-user interference power under the MRT beamforming vectors designed from the mean channel vectors. Specifically, for GN $k$, these terms are expressed as
\begin{align}
\mathbb{E}[|{\mathbf{h}}^{H}_{k}\mathbf{w}_k|^2]&=\mathbf{w}^{H}_{k}\pmb{\Phi}_k\mathbf{w}_{k}, \\
\mathbb{E}[|{\mathbf{h}}^{H}_{k}\mathbf{w}_i|^2]&=\mathbf{w}^{H}_{i}\pmb{\Phi}_k\mathbf{w}_{i}, ~\forall i \neq k,
\end{align}
where $\pmb{\Phi}_k$ is the channel correlation matrix represented by
\begin{align}
\pmb{\Phi}_k \!=\! \frac{\beta}{\|\mathbf q\!-\!\mathbf u_k\|^\alpha}\!
\left(\!\frac{K_R}{K_R\!+\!1}\mathbf{a}_k(\theta_k,\phi_k)\mathbf{a}^H_k(\theta_k,\phi_k) \!+\! \frac{1}{K_R\!+\!1} \mathbf{I} \!\right).
\label{eq:Rk_identity}
\end{align}
Using these average signal and interference power terms, we adopt the following long-term geometry-dependent SE approximation for GN $k$ \cite{Gan21}:
\begin{align}
    R_{k}
    =\log_2 \left(1+\frac{\mathbf{w}^{H}_{k}\pmb{\Phi}_k\mathbf{w}_{k}}
    {\sum_{i \neq k}\mathbf{w}^{H}_{i}\pmb{\Phi}_k\mathbf{w}_{i}+\sigma^2} \right).
    \label{rkkk}
\end{align}

To make the geometry dependence under MRT more explicit, we derive closed-form expressions for the desired signal term $\mathbf{w}^{H}_{k}\pmb{\Phi}_k\mathbf{w}_{k}$ and the inter-user interference terms $\{\mathbf{w}^{H}_{i}\pmb{\Phi}_k\mathbf{w}_{i},\forall i\!\neq\! k\}$. First, the desired signal term is given by
\begin{align}
\mathbf{w}^{H}_{k}\pmb{\Phi}_k\mathbf{w}_{k}
&=\frac{p_k}{\|\mathbf a_k(\theta_k,\phi_k)\|^2}\mathbf{a}^H_k(\theta_k,\phi_k)\pmb{\Phi}_k\mathbf{a}_k(\theta_k,\phi_k)\nonumber\\
&=\frac{{\beta}p_k}{\|\mathbf{q}\!-\!\mathbf{u}_k\|^{\alpha}}
\bigg(\!\frac{K_R}{K_R\!+\!1}\big\|\mathbf{a}_k(\theta_k,\phi_k)\big\|^{2}\!+\!\frac{1}{K_R\!+\!1}\!\bigg) \nonumber\\
&=\frac{{\beta}p_k}{\|\mathbf{q}\!-\!\mathbf{u}_k\|^{\alpha}}\bigg(\!\frac{K_R}{K_R\!+\!1}N_xN_y\!+\!\frac{1}{K_R\!+\!1}\!\bigg),
\end{align}
where $\|\mathbf{a}_k(\theta_k,\phi_k)\|^{2}\!=\!N_xN_y,\forall k$ for an $N_x \!\times\! N_y$ UPA. Hence, under MRT, the desired term depends on the distance-dependent path loss and a constant array-gain-related factor.

Similarly, the inter-user interference term contributed by beam $i$ to GN $k$ is given by
\begin{align}
&\mathbf{w}^{H}_{i}\pmb{\Phi}_k\mathbf{w}_{i}
=\frac{p_i}{\|\mathbf a_i(\theta_i,\phi_i)\|^2}\mathbf{a}^H_i(\theta_i,\phi_i)\pmb{\Phi}_k\mathbf{a}_i(\theta_i,\phi_i) \nonumber\\
&=\frac{{\beta}p_i}{\|\mathbf{q}\!-\!\mathbf{u}_k\|^{\alpha}}
\bigg(\!\frac{K_R}{K_R\!+\!1}\frac{\left|\mathbf{a}_k^{H}(\theta_k,\phi_k)\mathbf{a}_i(\theta_i,\phi_i)\right|^{2}}{\|\mathbf{a}_i(\theta_i,\phi_i)\|^{2}}\!+\!\frac{1}{K_R\!+\!1}\!\bigg) \nonumber\\
&=\frac{{\beta}p_i}{\|\mathbf{q}\!-\!\mathbf{u}_k\|^{\alpha}}\bigg(\!\frac{K_R}{K_R\!+\!1}\frac{\left|\mathbf{a}_k^{H}(\theta_k,\phi_k)\mathbf{a}_i(\theta_i,\phi_i)\right|^{2}}{N_xN_y}\!+\!\frac{1}{K_R+1}\!\bigg). \label{inter}
\end{align}
Unlike the desired term, the interference term explicitly depends on the inter-user steering-vector correlation and thus reflects the angular separation among GNs.

The correlation term $\left|\mathbf{a}_k^{H}(\theta_k,\phi_k)\mathbf{a}_i(\theta_i,\phi_i)\right|^{2}$ in \eqref{inter} can be written as the product of two Dirichlet-kernel magnitude-squared terms \cite{Zhang23}:
\begin{align}
&\left|\mathbf{a}_k^{H}(\theta_k,\phi_k)\mathbf{a}_i(\theta_i,\phi_i)\right|^{2}
\!=\!\left|\sum_{n_y=0}^{N_y-1}\sum_{n_x=0}^{N_x-1}\!e^{-j \psi^x_{k,i} n_x } e^{-j \psi^y_{k,i} n_y}\right|^{2} \nonumber\\
&~~~~~~~~~~~~~~~~~~~~~=\!\left|\sum_{n_x=0}^{N_x-1} \!e^{-j \psi^x_{k,i} n_x}\right|^{2}
\left|\sum_{n_y=0}^{N_y-1} \!e^{-j \psi^y_{k,i} n_y}\right|^{2} \nonumber\\
&~~~~~~~~~~~~~~~~~~~~~=\!\left|\frac{\sin\!\left(\frac{N_x\psi^x_{k,i}}{2}\right)}{\sin\!\left(\frac{\psi^x_{k,i}}{2}\right)}\right|^{2}
\left|\frac{\sin\!\left(\frac{N_y\psi^y_{k,i}}{2}\right)}{\sin\!\left(\frac{\psi^y_{k,i}}{2}\right)}\right|^{2}, \label{dirichlet_prod}
\end{align}
where $\psi^x_{k,i}\!=\!\frac{2\pi d}{\lambda}(\Omega^x_k\!-\!\Omega^x_i)$ and $\psi^y_{k,i}\!=\!\frac{2\pi d}{\lambda}(\Omega^y_k\!-\!\Omega^y_i)$. 

Note that \eqref{dirichlet_prod} is periodic, with a sharp main peak around $(\psi^x_{k,i},\psi^y_{k,i})\!=\!(0,0)$ and sidelobes away from the peak. This behavior reflects the fact that larger arrays produce narrower main lobes and hence stronger angular selectivity. Since the MRT-induced interference term explicitly depends on $\left|\mathbf{a}_k^{H}(\theta_k,\phi_k)\mathbf{a}_i(\theta_i,\phi_i)\right|^{2}$, increasing the angular separation among GNs generally reduces $\left|\mathbf{a}_k^{H}(\theta_k,\phi_k)\mathbf{a}_i(\theta_i,\phi_i)\right|^{2}$ and thereby suppresses inter-user interference. This observation motivates geometry-aware UAV positioning.

\begin{pro}[Gaussian approximation of the Dirichlet kernel near the main lobe]
Let $D_N(\psi) \!\triangleq\! \frac{\sin\left(\frac{N\psi}{2}\right)}{\sin\left(\frac{\psi}{2}\right)}$. Then, for sufficiently small $|\psi|$, the squared magnitude of $D_N(\psi)$ can be approximated as
\begin{align}
\left|D_N(\psi)\right|^2=\left|\frac{\sin\!\left(\frac{N\psi}{2}\right)}{\sin\!\left(\frac{\psi}{2}\right)}\right|^2 \approx N^2 e^{-\frac{(N^2-1)}{12}\psi^2}.
\end{align}
\end{pro}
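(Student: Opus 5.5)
We will establish the approximation by matching second-order Taylor expansions around the main-lobe peak $\psi=0$. The exponent is chosen so that $N^2e^{-\frac{N^2-1}{12}\psi^2}$ agrees with $|D_N(\psi)|^2$ in value and in curvature at $\psi=0$. Both functions are even, so the first-order terms vanish on both sides. The plan is therefore to (i) expand $\ln|D_N(\psi)|^2$ to order $\psi^2$, (ii) exponentiate, and (iii) compare with the Gaussian, whose logarithm is exactly quadratic.

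\textbf{Step 1 (log-expansion).} Write $\ln|D_N(\psi)|^2 = 2\ln\sin\!\left(\tfrac{N\psi}{2}\right) - 2\ln\sin\!\left(\tfrac{\psi}{2}\right)$, valid for small $\psi\neq 0$. Using $\sin u = u\left(1-\tfrac{u^2}{6}+O(u^4)\right)$ gives $\ln\sin u = \ln u - \tfrac{u^2}{6} + O(u^4)$. Substituting $u=N\psi/2$ and $u=\psi/2$, the $\ln u$ terms combine to $2\ln N$. The quadratic terms give
\begin{align}
\ln|D_N(\psi)|^2 = 2\ln N - \frac{2}{6}\cdot\frac{(N^2-1)\psi^2}{4} + O(\psi^4) = \ln N^2 - \frac{N^2-1}{12}\psi^2 + O(\psi^4).
\end{align}

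\textbf{Step 2 (exponentiate).} Exponentiating yields $|D_N(\psi)|^2 = N^2 e^{-\frac{N^2-1}{12}\psi^2}\,e^{O(\psi^4)} = N^2 e^{-\frac{N^2-1}{12}\psi^2}\bigl(1+O(\psi^4)\bigr)$. The relative error is thus fourth order in $\psi$, which is the precise meaning of ``$\approx$ for sufficiently small $|\psi|$.'' The limit value $N^2$ at $\psi=0$ follows from L'H\^opital, or from $D_N(0)=\sum_{n=0}^{N-1}1=N$ using the sum form in \eqref{dirichlet_prod}.

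\textbf{Cross-check.} As an independent check, we will verify the curvature directly from the sum representation $|D_N(\psi)|^2=\bigl|\sum_{n=0}^{N-1}e^{-jn\psi}\bigr|^2$. Expanding each phase to second order shows that the normalized quantity $|D_N|^2/N^2$ equals $1-\psi^2\,\mathrm{Var}(n)+O(\psi^4)$, where $n$ is uniform on $\{0,\dots,N-1\}$. This variance equals $(N^2-1)/12$, which confirms the coefficient.

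\textbf{Main obstacle.} The main point requiring care is not the algebra but stating in what sense the approximation holds. The $O(\psi^4)$ constant grows with $N$, since the expansion is really in $u=N\psi/2$. So ``sufficiently small'' should be read as $|\psi|\ll 1/N$, i.e., well inside the main lobe $|\psi|<2\pi/N$. We will state this explicitly and note that outside this region the Gaussian decays faster than the sidelobes. This underestimation of the sidelobes is precisely why the paper later safeguards the surrogate.
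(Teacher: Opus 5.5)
Your proposal is correct and follows essentially the same route as the paper: take logarithms, expand $\ln|\sin x|=\ln|x|-\frac{x^2}{6}+O(x^4)$ at $x=\frac{N\psi}{2}$ and $x=\frac{\psi}{2}$, subtract so that the $\ln|x|$ terms leave $2\ln N$, drop the higher-order terms, and exponentiate. Your variance cross-check is a nice extra, and writing $\ln|\sin u|$ instead of $\ln\sin u$ would also cover $\psi<0$. The paper carries the expansion one order further before truncating, which makes your point about the $N$-dependence of the remainder explicit. Note, however, that the correct quartic coefficient is $\frac{N^4-1}{1440}$, not the $\frac{N^4-1}{720}$ printed there; for example, $N=2$ gives $\ln\left(4\cos^2\frac{\psi}{2}\right)=\ln 4-\frac{\psi^2}{4}-\frac{\psi^4}{96}+\cdots$. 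This slip does not affect the result, since the term is discarded.
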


\begin{IEEEproof}
Since the main lobe is centered at $\psi\!=\!0$, we expand $\left|D_N(\psi)\right|^2$ around $\psi\!=\!0$. Taking the logarithm gives
\begin{align}
    \ln \left|D_N(\psi)\right|^2
    =
    2\ln\left|\sin\!\left(\frac{N\psi}{2}\right)\right|
    -
    2\ln\left|\sin\!\left(\frac{\psi}{2}\right)\right|.
\end{align}
Using the Taylor expansion of $\ln\left|\frac{\sin x}{x}\right|$ around $x\!=\!0$, i.e., $\ln\left|\frac{\sin x}{x}\right|\!=\!-\frac{x^2}{6}\!-\!\frac{x^4}{180}\!+\!O(x^6)$, we obtain
\begin{align}
\ln |\sin x| = \ln|x|-\frac{x^2}{6}-\frac{x^4}{180}+O(x^6).
\end{align}
Applying this expansion to $x\!=\!\frac{N\psi}{2}$ and $x\!=\!\frac{\psi}{2}$ yields
\begin{align}
\!\!\!2\ln\!\left|\sin\!\left(\frac{N\psi}{2}\right)\right| &=\! 2\ln\!\left|\frac{N\psi}{2}\right|\!-\!\frac{N^2\psi^2}{12} \!-\!\frac{N^4\psi^4}{720} \!+\!O(\psi^6),\\
\!\!\!2\ln\!\left|\sin\!\left(\frac{\psi}{2}\right)\right| &=\! 2\ln\!\left|\frac{\psi}{2}\right|\!-\!\frac{\psi^2}{12}\!-\!\frac{\psi^4}{720}\!+\!O(\psi^6).
\end{align}
Taking the difference, we obtain
\begin{align}
\ln \!\left|D_N(\psi)\right|^2 \!=\! 2\ln \!N \!-\!\frac{N^2\!-\!1}{12}\psi^2\!-\!\frac{N^4\!-\!1}{720}\psi^4\!+\!O(\psi^6).
\end{align}
For sufficiently small $|\psi|$, neglecting the fourth- and higher-order terms gives
\begin{align}
\ln \left|D_N(\psi)\right|^2 \approx 2\ln N-\frac{N^2-1}{12}\psi^2.
\end{align}
Exponentiating both sides yields
\begin{align}
\left|D_N(\psi)\right|^2 \approx N^2 e^{-\frac{(N^2-1)}{12}\psi^2},
\end{align}
which completes the proof.
\end{IEEEproof}

Using Proposition 1, the oscillatory product form in \eqref{dirichlet_prod} can be approximated around the main-lobe region $(\psi^x_{k,i},\psi^y_{k,i})\!\approx\!(0,0)$ by the following smooth Gaussian surrogate:
\begin{align}
&\left|\mathbf{a}_k^{H}(\theta_k,\phi_k)\mathbf{a}_i(\theta_i,\phi_i)\right|^{2}
\underset{(\psi^x_{k,i},\psi^y_{k,i})\approx(0,0)}{\approx} \nonumber \\
&~~~~~~~~~~\bigg(N_x^2 e^{-\frac{(N_x^2-1)}{12}(\psi^x_{k,i})^{2}}\bigg)
\bigg(N_y^2 e^{-\frac{(N_y^2-1)}{12}(\psi^y_{k,i})^{2}}\bigg).
\label{eq:dirichlet_gauss}
\end{align}
While \eqref{eq:dirichlet_gauss} provides a smooth and locally accurate approximation near the main lobe, the exact Dirichlet-kernel product in \eqref{dirichlet_prod} is $2\pi$-periodic in each phase variable. In particular, under half-wavelength spacing $d\!=\!\frac{\lambda}{2}$, we have $\psi^\nu_{k,i}\!=\!\frac{2\pi d}{\lambda}(\Omega_k^\nu\!-\!\Omega_i^\nu)\!\in\![-2\pi,2\pi]$ for $\nu\!\in\!\{x,y\}$, since $\Omega_k^\nu,\Omega_i^\nu\!\in\![-1,1]$. Over this interval, the periodic structure gives rise not only to the dominant peak around $\psi^\nu_{k,i}\!\approx\! 0$ but also to two periodic peaks near the boundaries $\psi^\nu_{k,i}\!\approx\! \pm 2\pi$. Therefore, a single Gaussian centered at the origin may underestimate the correlation when $\psi^\nu_{k,i}$ is close to $\pm 2\pi$.

To capture these three dominant peaks while retaining a smooth and tractable form, we approximate each one-dimensional Dirichlet-kernel term by a superposition of three Gaussian components centered at $0$ and $\pm 2\pi$, defined as
\begin{align}
&g^{\nu}_{k,i}\!=\!N_{\nu}^2\bigg(\!e^{-\tfrac{(N_{\nu}^2-1)}{12}(\psi^{\nu}_{k,i})^{2}}\!\!+\!e^{-\tfrac{(N_{\nu}^2-1)}{12}(\psi^{\nu}_{k,i}-2\pi)^{2}} \nonumber\\
&~~~~~~~~~~~~~~~~~~~~~~+\!e^{-\tfrac{(N_{\nu}^2-1)}{12}(\psi^{\nu}_{k,i}+2\pi)^{2}}\!\bigg),~ \nu \!\in\! \{x, y\},
\label{eq:g_nu}
\end{align}
and, by leveraging the separable UPA structure, we approximate the two-dimensional steering-vector correlation as
\begin{align}
\left|\mathbf{a}_k^{H}(\theta_k,\phi_k)\mathbf{a}_i(\theta_i,\phi_i)\right|^{2}
\approx g^{x}_{k,i}  g^{y}_{k,i}.
\label{eq:gauss_superposition}
\end{align}

The surrogate in \eqref{eq:gauss_superposition} is designed to capture the dominant peaks of the Dirichlet-kernel correlation term, namely, the main peak around $\psi^\nu_{k,i} \!\approx\! 0$ and the two periodic peaks near $\psi^\nu_{k,i} \!\approx\! \pm 2\pi$. However, as shown by the black solid curve in Fig.~\ref{fig2}, the exact Dirichlet-kernel profile in \eqref{dirichlet_prod} is not a smooth unimodal function; outside the main lobe, it exhibits oscillatory sidelobes with non-negligible local maxima. Since the Gaussian-mixture surrogate in \eqref{eq:g_nu} provides a smooth, non-oscillatory envelope that decays away from the dominant peaks, it may fall below these sidelobe maxima and thus underestimate the true correlation in the sidelobe region, as illustrated by the blue dash-dot curve in Fig.~\ref{fig2}. Such underestimation is undesirable for our purpose, because it leads to an underestimation of the MRT-induced inter-user interference and hence an overly optimistic interference model. To avoid this issue, we impose a conservative floor on the surrogate so that its sidelobe level is not underestimated. Specifically, we introduce an axis-wise floor constant $\epsilon_{\nu}$, chosen as the peak magnitude of the first sidelobe of the Dirichlet-kernel term associated with the $N_\nu$-element ULA, i.e., the largest sidelobe peak. It is computed numerically as
\begin{align}
\epsilon_{\nu} \triangleq \max_{\psi \in \mathcal{S}_\nu}
\left|\frac{\sin\!\left(\frac{N_\nu \psi}{2}\right)}{\sin\!\left(\frac{\psi}{2}\right)}\right|^{2},
\end{align}
where $\mathcal{S}_\nu \!=\! \left[\frac{2\pi}{N_\nu},\frac{4\pi}{N_\nu}\right]$ denotes the first-sidelobe region between the first two nulls for $\nu \!\in\! \{x,y\}$.

\begin{figure}[t]
    \centerline{\includegraphics[width=0.8\linewidth]{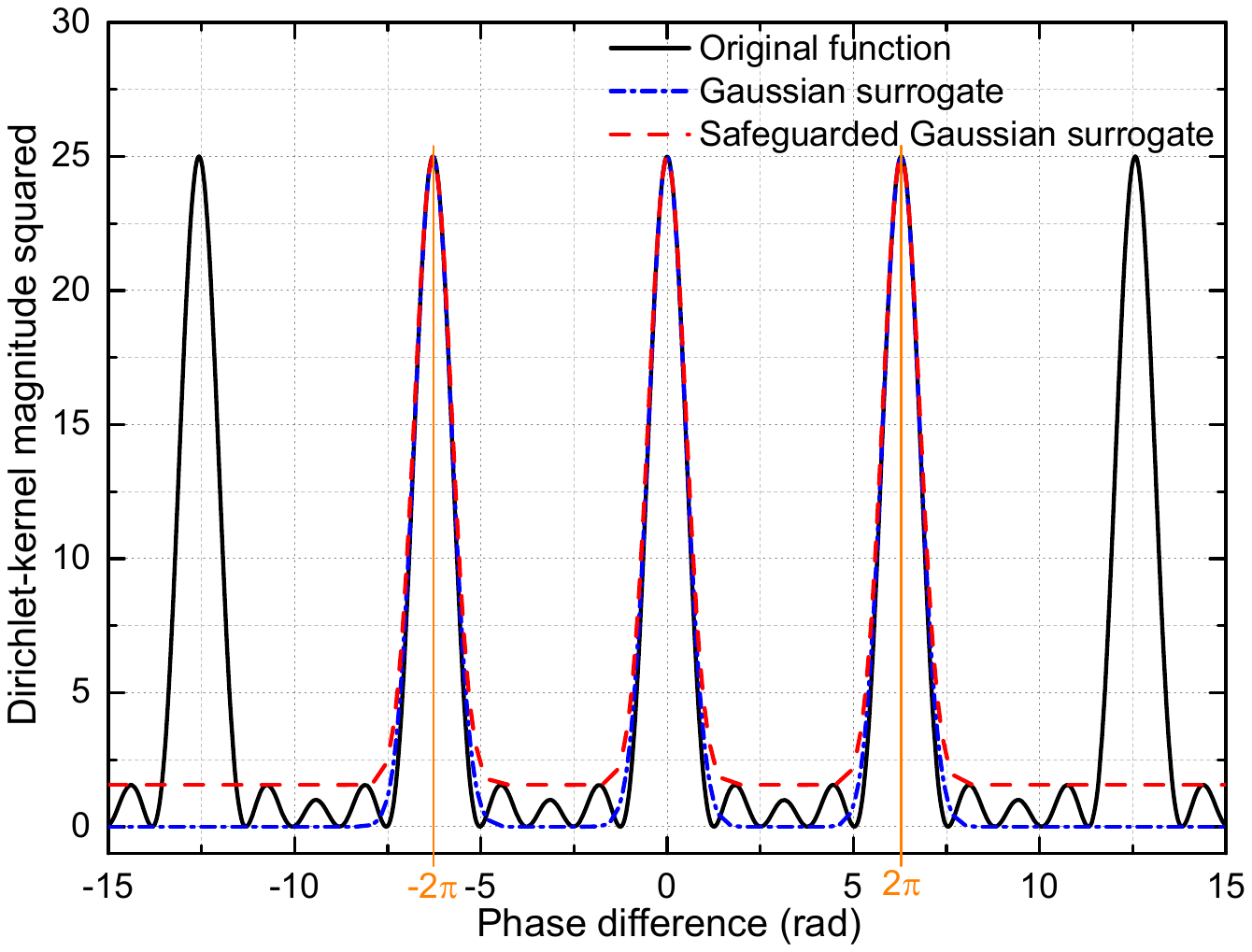}} \vspace{-4mm}
    \caption{Comparison of Dirichlet-kernel magnitude squared $\left|\sin\!\left(\frac{N_{\nu}\psi^{\nu}_{k,i}}{2}\right)/\sin\!\left(\frac{\psi^{\nu}_{k,i}}{2}\right)\right|^{2}$ and its Gaussian surrogates versus phase difference $\psi^\nu_{k,i}$ with $N_{\nu}\!=\!5$.} \vspace{-6mm}
    \label{fig2}
\end{figure}

Accordingly, we incorporate the sidelobe floor into the Gaussian surrogate through an affine scaling and shifting, so that the dominant peak structure around $\psi^\nu_{k,i} \!\approx\! 0,\pm 2\pi$ is retained while sidelobe underestimation is avoided. This yields a conservative surrogate for the sidelobe region. The resulting safeguarded surrogate for each axis, shown by the red dashed curve in Fig.~\ref{fig2}, is defined as
\begin{align}
&\hat{g}^{\nu}_{k,i}
\triangleq \big(N_{\nu}^2\!-\!\epsilon_{{\nu}}\big)\bigg(
e^{-\tfrac{(N_{\nu}^2-1)}{12}(\psi^{\nu}_{k,i})^{2}}
\!+\!e^{-\tfrac{(N_{\nu}^2-1)}{12}(\psi^{\nu}_{k,i}-2\pi)^{2}} \nonumber \\
&\hspace{2cm}
+\!e^{-\tfrac{(N_{\nu}^2-1)}{12}(\psi^{\nu}_{k,i}+2\pi)^{2}}
\bigg)\!+\!\epsilon_{{\nu}}, ~\nu \!\in\! \{x, y\}.
\label{eq:ghat_nu}
\end{align}
Finally, using \eqref{eq:ghat_nu}, the two-dimensional correlation surrogate is constructed as
\begin{align}
\hat{g}_{k,i} \triangleq \hat{g}^{x}_{k,i} \hat{g}^{y}_{k,i},
\label{eq:ghat_2d}
\end{align}
which serves as a safeguarded approximation of $\left|\mathbf{a}_k^{H}(\theta_k,\phi_k)\mathbf{a}_i(\theta_i,\phi_i)\right|^{2}$ over $\psi^\nu_{k,i}\!\in\![-2\pi,2\pi]$.

By substituting $\hat{g}_{k,i}$ into the interference term in \eqref{inter}, we obtain the following tractable SE surrogate for GN $k$:
\begin{align}
    \tilde{R}_{k} \!\triangleq\! \log_2\!\Bigg(\!1\!+\!\frac{p_k\Big(\!\frac{K_R}{K_R+1}N_xN_y\!+\!\frac{1}{K_R+1}\!\Big)}{\sum_{i \neq k}p_i\Big(\!\frac{K_R}{K_R+1}\frac{\hat{g}_{k,i}}{N_xN_y} \!+\! \frac{1}{K_R+1}\!\Big) \!+\! \sigma^2\frac{\|\mathbf{q}-\mathbf{u}_k\|^{\alpha}}{\beta}}\!\Bigg). \label{rkgk}
\end{align}
The original SE constraint in \eqref{constrk} is then approximated by 
\begin{align}
\tilde{R}_{k} \geq \eta, ~\forall k. \label{constrk10}
\end{align}

Hence, under fixed MRT beamforming based on the mean channel, the UAV positioning subproblem is formulated as
\begin{align} 
\textbf{(SP1):} ~\max_{\mathbf{q},~\eta} & ~~~~~~~~ \eta \nonumber \\ 
\textrm{subject to} & ~~~~ \eqref{qmax},~\eqref{constrk10}. \nonumber 
\end{align}
However, \textbf{(SP1)} remains nonconvex with respect to the UAV position $\mathbf{q}$. In particular, although the original steering-vector correlation term is replaced by the smooth surrogate $\hat{g}_{k,i}$, the resulting SE constraint \eqref{constrk10} still involves highly nonlinear and coupled functions of $\mathbf{q}$ through both the distance-dependent path-loss term and the geometry-dependent phase-difference terms embedded in $\hat{g}_{k,i}$. Therefore, \textbf{(SP1)} is still intractable to solve directly. To address this issue, we next develop an efficient iterative solution.

\subsubsection{Problem Solving}

To solve \textbf{(SP1)}, we construct a convex problem under the SCA framework and solve it iteratively until the UAV-position update converges. The main difficulty is that $\psi^{\nu}_{k,i}$ in $\hat{g}_{k,i}$ is a nonlinear function of the UAV position $\mathbf{q}$, since it is determined by the elevation and azimuth angles $(\theta_k,\phi_k)$ and $(\theta_i,\phi_i)$, which vary with $\mathbf{q}$, and also appears inside the exponential functions in $\hat{g}^{\nu}_{k,i}$. This coupling between the nonlinear phase-difference terms and the exponential functions, together with the product form $\hat{g}_{k,i}\!=\!\hat{g}^{x}_{k,i}\hat{g}^{y}_{k,i}$, makes the positioning problem nonconvex in $\mathbf{q}$ and prevents a direct convex reformulation. Therefore, we next develop a tractable reformulation based on convex bounding and auxiliary-variable decoupling. Since the derivations for $\nu\!=\!x$ and $\nu\!=\!y$ are identical, we present only the case $\nu\!=\!x$ without loss of generality.

Recall that the steering-vector correlation term $\left|\mathbf{a}_k^{H}(\theta_k,\phi_k)\mathbf{a}_i(\theta_i,\phi_i)\right|^{2}$ appears in the interference component. To obtain a conservative approximation of the SE constraint in \eqref{constrk10}, we construct a lower bound on the achievable SE by upper bounding the interference term in \eqref{inter}. Therefore, it is crucial to derive a tractable upper bound for the surrogate correlation $\hat g_{k,i}$. Since $\hat g^{\nu}_{k,i}$ in \eqref{eq:ghat_nu} is a positive weighted sum of exponential terms of the form $\exp\!\big(\!-b(\psi^{\nu}_{k,i})^2\big)$ with $b\!>\!0$, and $\exp(-bx)$ is monotonically decreasing in $x$, an upper bound on $\hat g^{\nu}_{k,i}$ can be obtained by constructing a tractable concave lower bound on $(\psi^{\nu}_{k,i})^2$. This is particularly useful because the resulting lower-bounding form can be further incorporated into the subsequent convex approximation procedure. In particular, to facilitate such a reformulation, we explicitly rewrite the phase-difference term $(\psi^{x}_{k,i})^{2}$ as a function of the UAV position $\mathbf{q}$. Using the identities $(x_k\!-\!x)^2\!=\!\|\mathbf{q}\!-\!\mathbf{u}_k\|^2\!-\!(y_k\!-\!y)^2\!-\!(z_k\!-\!z)^2$ and rearranging the product term involving $(x\!-\!x_k)(x\!-\!x_i)$, $(\psi^{x}_{k,i})^2$ can be expressed as
\begin{align}
&(\psi^{x}_{k,i})^2=\frac{(2\pi d)^2}{\lambda^2}\Bigg(\!\frac{x_{k}\!-\!x}{\|\mathbf{q}\!-\!\mathbf{u}_k\|}-\frac{x_{i}\!-\!x}{\|\mathbf{q}\!-\!\mathbf{u}_i\|}\!\Bigg)^2 \nonumber\\
&~~=\frac{(2\pi d)^2}{\lambda^2}\Bigg(\!2\!-\!\frac{(y_k\!-\!y)^2\!+\!(z_k\!-\!z)^2}{\|\mathbf{q}\!-\!\mathbf{u}_k\|^2} \!-\!\frac{(y_i\!-\!y)^2\!+\!(z_i\!-\!z)^2}{\|\mathbf{q}\!-\!\mathbf{u}_i\|^2}
 \nonumber\\
&~~~~~~~ \!-\!\frac{2\left(x\!-\!\frac{x_k+x_i}{2}\right)^2}{\|\mathbf{q}\!-\!\mathbf{u}_k\|\|\mathbf{q}\!-\!\mathbf{u}_i\|} \!+\! \frac{(x_k\!-\!x_i)^2}{2\|\mathbf{q}\!-\!\mathbf{u}_k\|\|\mathbf{q}\!-\!\mathbf{u}_i\|}\Bigg). \label{psi1}
\end{align}

To obtain a tractable convex approximation of \eqref{psi1}, we first decouple the distance-dependent denominators ($\|\mathbf{q}\!-\!\mathbf{u}_k\|^2$ and $\|\mathbf{q}\!-\!\mathbf{u}_i\|^2$) and the coupled product term ($\|\mathbf{q}\!-\!\mathbf{u}_k\|\|\mathbf{q}\!-\!\mathbf{u}_i\|$) by introducing nonnegative slack variables $\underline{\mu}_{k}$, $\overline{\mu}_{k}$, $\underline{\kappa}_{k,i}$, and $\overline{\kappa}_{k,i}$ as follows:
\begin{align}
&\underline{\mu}_{k} \le \|\mathbf{q}-\mathbf{u}_k\|^2, ~\forall k, \label{s} \\ 
&\overline{\mu}_{k} \ge \|\mathbf{q}-\mathbf{u}_k\|^2, ~\forall k, \label{s'} \\
&\underline{\kappa}_{k,i} \le \sqrt{\underline{\mu}_{k} \underline{\mu}_{i}}, ~\forall k,i, ~i\!\neq\! k, \label{t} \\
&\overline{\kappa}_{k,i} \ge \sqrt{\overline{\mu}_{k} \overline{\mu}_{i}}, ~\forall k,i, ~i\!\neq\! k. \label{t'}
\end{align}
Here, $\underline{\mu}_{k}$ and $\overline{\mu}_{k}$ provide lower and upper bounds on $\|\mathbf{q}\!-\!\mathbf{u}_k\|^2$, respectively, while $\underline{\kappa}_{k,i}$ and $\overline{\kappa}_{k,i}$ provide lower and upper bounds on $\|\mathbf{q}\!-\!\mathbf{u}_k\|\|\mathbf{q}\!-\!\mathbf{u}_i\|$, respectively.

Note that constraints \eqref{s} and \eqref{t'} are nonconvex: \eqref{s} imposes $\underline{\mu}_{k}$ to lie below the convex function $\|\mathbf{q}\!-\!\mathbf{u}_k\|^2$, while \eqref{t'} enforces $\overline{\kappa}_{k,i}$ to be above the concave function $\sqrt{\overline{\mu}_{k} \overline{\mu}_{i}}$. Following the SCA principle, we replace the right-hand side of \eqref{s} by its first-order Taylor lower bound and that of \eqref{t'} by its first-order Taylor upper bound at the previous iterate, which yields the following conservative convex approximations:
\begin{align}
\underline{\mu}_{k} &\le \|\mathbf{q}^{\textrm{prev}}\!-\!\mathbf{u}_k\|^2\!+\!2\big(\mathbf{q}^{\textrm{prev}}\!-\!\mathbf{u}_k\big)^{T}\big(\mathbf{q}\!-\!\mathbf{q}^{\textrm{prev}}\big), ~\forall k,
\label{s_taylor} \\
\overline{\kappa}_{k,i} &\ge \sqrt{\overline{\mu}_{k}^{\textrm{prev}}\overline{\mu}_{i}^{\textrm{prev}}}
\!+\!\frac{1}{2}\sqrt{\frac{\overline{\mu}_{i}^{\textrm{prev}}}{\overline{\mu}_{k}^{\textrm{prev}}}}\big(\overline{\mu}_{k}\!-\!\overline{\mu}_{k}^{\textrm{prev}}\big) \nonumber \\
&~~~~~+\frac{1}{2}\sqrt{\frac{\overline{\mu}_{k}^{\textrm{prev}}}{\overline{\mu}_{i}^{\textrm{prev}}}}\big(\overline{\mu}_{i}\!-\!\overline{\mu}_{i}^{\textrm{prev}}\big), ~\forall k,i, ~i\!\neq\! k,
\label{t'_taylor}
\end{align}
where $\mathbf{q}^{\textrm{prev}}$, $\overline{\mu}_{k}^{\textrm{prev}}$, and $\overline{\mu}_{i}^{\textrm{prev}}$ are the corresponding values at the previous SCA iterate.

With these slack variables, \eqref{psi1} can be lower bounded as 
\begin{align}
&(\psi^{x}_{k,i})^2\geq\frac{(2\pi d)^2}{\lambda^2}\Bigg(\!2\!-\!\frac{(y_k\!-\!y)^2\!+\!(z_k\!-\!z)^2}{\underline{\mu}_{k}}
 \nonumber\\
&-\!\frac{(y_i\!-\!y)^2\!+\!(z_i\!-\!z)^2}{\underline{\mu}_{i}} \!-\!\frac{2\left(x\!-\!\frac{x_k+x_i}{2}\right)^2}{\underline{\kappa}_{k,i}} \!+\!\frac{(x_k\!-\!x_i)^2}{2\overline{\kappa}_{k,i}}\!\Bigg).
\label{psi2}
\end{align}
Since the quadratic-over-linear function $\frac{\|\mathbf{y}\|^2}{x}$ is jointly convex for $x\!>\!0$, the first three ratio terms in \eqref{psi2} are jointly convex in the UAV position and the corresponding slack variables; hence, with the negative signs, they are jointly concave. In contrast, the last term in \eqref{psi2}, namely $\frac{(x_k-x_i)^2}{2\overline{\kappa}_{k,i}}$, is convex in $\overline{\kappa}_{k,i}\!>\!0$. Therefore, its first-order Taylor approximation at the previous iterate $\overline{\kappa}_{k,i}^{\textrm{prev}}$ provides the following lower bound:
\begin{align}
\frac{({x}_{k}\!-\!{x}_{i})^2}{2\overline{\kappa}_{k,i}} \!\geq\! -\frac{({x}_{k}\!-\!{x}_{i})^2}{2(\overline{\kappa}_{k,i}^{\textrm{prev}})^2}\big(\overline{\kappa}_{k,i}\!-\!\overline{\kappa}_{k,i}^{\textrm{prev}}\big)\!+\!\frac{({x}_{k}\!-\!{x}_{i})^2}{2\overline{\kappa}_{k,i}^{\textrm{prev}}}  \triangleq \chi_{k,i}. 
\end{align}
Substituting $\chi_{k,i}$ into the last term of \eqref{psi2}, we obtain the following concave lower-bounding surrogate for $(\psi^{x}_{k,i})^2$:
\begin{align}
&(\psi^{x}_{k,i})^2\geq\frac{(2\pi d)^2}{\lambda^2}\Bigg(\!2\!-\!\frac{(y_k\!-\!y)^2\!+\!(z_k\!-\!z)^2}{\underline{\mu}_{k}}
 \nonumber\\
&-\!\frac{(y_i\!-\!y)^2\!+\!(z_i\!-\!z)^2}{\underline{\mu}_{i}} \!-\!\frac{2\left(x\!-\!\frac{x_k+x_i}{2}\right)^2}{\underline{\kappa}_{k,i}} \!+\! \chi_{k,i}\!\Bigg) \!\triangleq \Xi^{x}_{k,i}.
\end{align}

We next derive explicit lower and upper bounds on the phase-difference term $\psi^{x}_{k,i}$ itself. This is necessary because the surrogate $\hat{g}^{x}_{k,i}$ in \eqref{eq:ghat_nu} contains the shifted terms $(\psi^{x}_{k,i} \!\pm\! 2\pi)^2$, which expand to $(\psi^{x}_{k,i})^2 \!\pm\! 4\pi\psi^{x}_{k,i} \!+\! 4\pi^2$ and therefore involve linear terms in $\psi^{x}_{k,i}$. Consequently, bounding the shifted components requires explicit bounds on $\psi^{x}_{k,i}$ (in addition to bounds on $(\psi^{x}_{k,i})^2$). To this end, we first rewrite $\psi^{x}_{k,i}$ as a function of the UAV position $\mathbf{q}$, yielding
\begin{align}
\psi^{x}_{k,i}\!=\!\frac{2\pi d}{\lambda}\Bigg(\!\frac{{x}_{k}}{\|\mathbf{q}\!-\!\mathbf{u}_k\|}\!-\!\frac{{x}}{\|\mathbf{q}\!-\!\mathbf{u}_k\|}\!-\!\frac{{x}_{i}}{\|\mathbf{q}\!-\!\mathbf{u}_i\|}\!+\!\frac{{x}}{\|\mathbf{q}\!-\!\mathbf{u}_i\|}\!\Bigg). \label{psix_expand}
\end{align}
According to constraint \eqref{qmax}, the UAV and all GNs lie in the nonnegative region, so that $x,x_k,x_i \!\ge\! 0$ over the feasible set. This property is important for constructing valid bounds for the ratio terms in \eqref{psix_expand}, because the monotonicity of a ratio $\frac{a}{y}$ with respect to the positive denominator $y\!>\!0$ depends on the sign of the numerator $a$. Specifically, for any fixed $a \!\ge\! 0$, $\frac{a}{y}$ is monotonically decreasing in $y$, whereas for $a \!\le\! 0$, it is monotonically increasing. Therefore, if the numerators in \eqref{psix_expand} are allowed to change sign, the inequality directions would become sign-dependent, and a single consistent bounding construction would not be available. Accordingly, under the nonnegative-domain condition $x,x_k,x_i \!\ge\! 0$, the subsequent bounds can be derived consistently.

From the expansion $(\psi^{x}_{k,i}\pm 2\pi)^2\!=\!(\psi^{x}_{k,i})^2 \!\pm\! 4\pi \psi^{x}_{k,i} \!+\! 4\pi^2$, it follows that the $+2\pi$-shifted term requires a concave lower bound on $\psi^{x}_{k,i}$, whereas the $-2\pi$-shifted term requires a convex upper bound on $\psi^{x}_{k,i}$.

\noindent\textit{i) $+2\pi$-shifted term:}
From \eqref{psix_expand}, a lower bound on $\psi^{x}_{k,i}$ is obtained by lower bounding the positive terms and upper bounding the negative terms. Specifically,
\begin{align}
\psi^{x}_{k,i}
&\geq \frac{2\pi d}{\lambda}\Bigg(\frac{x_{k}}{\overline{\xi}_k}-\frac{{x}}{\underline{\xi}_k}-\frac{{x}_{i}}{\underline{\xi}_i}+\frac{{x}}{\overline{\xi}_i}\Bigg), \label{psi_plus_step1}
\end{align}
where the slack variables $\underline{\xi}_k$ and $\overline{\xi}_k$ are introduced to decouple the distance-dependent denominators as
\begin{align}
&\underline{\xi}_k \le \|\mathbf q-\mathbf u_k\|, ~\forall k, \label{xi}\\
&\overline{\xi}_k \ge \|\mathbf q-\mathbf u_k\|, ~\forall k. \label{xi'}
\end{align}
Since \eqref{xi} is nonconvex, we replace its right-hand side by the first-order lower bound at $\mathbf q^{\textrm{prev}}$, which yields
\begin{align}
\underline{\xi}_k \leq \|\mathbf{q}^{\textrm{prev}}\!-\!\mathbf{u}_k\|\!+\!\frac{(\mathbf{q}^{\textrm{prev}}\!-\!\mathbf{u}_k)^T}{\|\mathbf{q}^{\textrm{prev}}\!-\!\mathbf{u}_k\|}\big(\mathbf{q}\!-\!\mathbf{q}^{\textrm{prev}}\big), ~\forall k. \label{xi_taylor} 
\end{align}

Moreover, since $\frac{x_{k}}{\overline{\xi}_k}$ is convex in $\overline{\xi}_k \!>\! 0$, its first-order Taylor approximation at $\overline{\xi}^{\textrm{prev}}_k$ provides the following lower bound:
\begin{align}
\frac{x_{k}}{\overline{\xi}_k} \ge \frac{{x}_{k}}{\overline{\xi}_k^{\textrm{prev}}}-\frac{x_{k}}{(\overline{\xi}_k^{\textrm{prev}})^2}\big(\overline{\xi}_k-\overline{\xi}_k^{\textrm{prev}}\big). \label{xkxi}
\end{align}

To construct a tractable lower bound for the negative ratio term $-\frac{x}{\underline{\xi}_k}$, we introduce an auxiliary variable $\rho$ such that
\begin{align}
x \le \rho^2. \label{rho}
\end{align}
Since $\underline{\xi}_k \!>\! 0$, this immediately implies $-\frac{x}{\underline{\xi}_k} \!\ge\! -\frac{\rho^2}{\underline{\xi}_k}$. The resulting term has a negative quadratic-over-linear form, which is jointly concave in $(\rho,\underline{\xi}_k)$ and will be useful in the subsequent convexification. However, the auxiliary constraint \eqref{rho} is nonconvex. To obtain a tractable reformulation, we replace $\rho^2$ by its first-order lower bound at $\rho^{\textrm{prev}}$, yielding
\begin{align}
x \le (\rho^{\textrm{prev}})^2+2\rho^{\textrm{prev}}(\rho-\rho^{\textrm{prev}}). \label{rho_taylor}
\end{align}

Applying the quadratic transform \cite{Shen18} to the positive ratio term $\frac{x}{\overline{\xi}_i}$ yields the lower bound with an auxiliary variable $\tau_i$:
\begin{align}
\frac{x}{\overline{\xi}_i}\ge 2\tau_i\sqrt{x}-\tau_i^2\overline{\xi}_i. \label{QT}
\end{align}

Finally, by substituting \eqref{xkxi} for $\frac{x_{k}}{\overline{\xi}_k}$, $-\frac{\rho^2}{\underline{\xi}_k}$ for $-\frac{{x}}{\underline{\xi}_k}$, and \eqref{QT} for $\frac{{x}}{\overline{\xi}_i}$, while retaining the term $-\frac{{x}_{i}}{\underline{\xi}_i}$, \eqref{psi_plus_step1} can be further lower bounded by the following tractable concave surrogate:
\begin{align}
\psi^{x}_{k,i}
&\geq \frac{2\pi d}{\lambda}\Bigg(\frac{{x}_{k}}{\overline{\xi}_k^{\textrm{prev}}}-\frac{x_{k}}{(\overline{\xi}_k^{\textrm{prev}})^2}\big(\overline{\xi}_k-\overline{\xi}_k^{\textrm{prev}}\big)
-\frac{\rho^2}{\underline{\xi}_k}-\frac{{x}_{i}}{\underline{\xi}_i} \nonumber \\
&~~~~~~~~~~~~~+2\tau_{i}\sqrt{{x}}-\tau_{i}^2{\overline{\xi}_i}\Bigg)
\triangleq \underline{\psi}^{x}_{k,i}. \label{psi_plus}
\end{align}

\noindent\textit{ii) $-2\pi$-shifted term:}
Similarly, to handle the $-2\pi$-shifted term, we construct an upper bound on $\psi^{x}_{k,i}$ as
\begin{align}
\psi^{x}_{k,i}
&\leq \frac{2\pi d}{\lambda}\Bigg(\frac{x_{k}}{\underline{\xi}_{k}}-\frac{{x}}{\overline{\xi}_{k}}-\frac{x_{i}}{\overline{\xi}_{i}}+\frac{{x}}{\underline{\xi}_{i}}\Bigg) \nonumber \\
&\leq \frac{2\pi d}{\lambda}\Bigg(\frac{{x}_{k}}{\underline{\xi}_{k}}-\big(2\tau_{k}\sqrt{x}-\tau_{k}^2\overline{\xi}_{k}\big)
-\frac{{x}_{i}}{\overline{\xi}_{i}^{\textrm{prev}}}\nonumber \\
&~~~~~~~~~~~~~+\frac{x_{i}}{(\overline{\xi}_{i}^{\textrm{prev}})^2}\big(\overline{\xi}_{i}-\overline{\xi}_{i}^{\textrm{prev}}\big)+\frac{\rho^2}{\underline{\xi}_{i}}\Bigg)
\triangleq \overline{\psi}^{x}_{k,i}. \label{psi_minus}
\end{align}
The second inequality is obtained by applying the same bounding techniques as in the $+2\pi$-shifted case, including the quadratic-transform bound, the first-order Taylor approximation, and the $\rho$-based ratio surrogate. However, because the $-2\pi$ shift changes the sign of the linear term to $-4\pi\psi^{x}_{k,i}$ in $(\psi^{x}_{k,i}-2\pi)^2$, the bound directions are reversed: positive terms in $\psi^{x}_{k,i}$ are upper bounded, while negative terms are lower bounded. As a result, $\overline{\psi}^{x}_{k,i}$ serves as a tractable convex upper-bounding surrogate for $\psi^{x}_{k,i}$.

For both the $+2\pi$- and $-2\pi$-shifted terms, the quadratic-transform term has the form $2\tau\sqrt{x}\!-\!\tau^2\overline{\xi}$, which is concave in $\tau$ for $\overline{\xi}\!>\!0$. Therefore, the tightest bound is obtained by maximizing it with respect to the corresponding auxiliary variable. This yields the closed-form updates
\begin{align}
\tau_i^{*}=\frac{\sqrt{x}}{\overline{\xi}_{i}},~\forall i \neq k,
\qquad
\tau_k^{*}=\frac{\sqrt{x}}{\overline{\xi}_{k}},~\forall k, \label{tau_opt}
\end{align}
where $\pmb{\tau}\!\triangleq\!\{\tau_k,\forall k\}$ denotes the set of auxiliary variables.

By substituting $\Xi^{x}_{k,i}$, $\underline{\psi}^{x}_{k,i}$, and $\overline{\psi}^{x}_{k,i}$ into \eqref{eq:ghat_nu}, we obtain the following upper bound for $\hat{g}^{x}_{k,i}$:
\begin{align}
\!\!\!&\hat{g}^{x}_{k,i}
\!\le\! (N_{x}^2\!-\!\epsilon_{{x}})\!\bigg(\!
e^{-\tfrac{(N_{x}^2-1)}{12}\Xi^{x}_{k,i}}
\!+\!e^{-\tfrac{(N_{x}^2-1)}{12}(\Xi^{x}_{k,i} + 4\pi \underline{\psi}^{x}_{k,i} + 4\pi^2)} \nonumber \\
\!\!\!&~~~~~~~~~~~+\!e^{-\tfrac{(N_{x}^2-1)}{12}(\Xi^{x}_{k,i} - 4\pi \overline{\psi}^{x}_{k,i} + 4\pi^2)}\!
\bigg)\!+\!\epsilon_{{x}} \triangleq \hat{g}^{x,\textrm{UB}}_{k,i}.
\label{eq:ghat_nu2}
\end{align}
Since $\Xi^{x}_{k,i}$ and $\underline{\psi}^{x}_{k,i}$ are concave, whereas $\overline{\psi}^{x}_{k,i}$ is convex, all three exponential arguments in \eqref{eq:ghat_nu2} are convex. Therefore, $\hat{g}^{x,\textrm{UB}}_{k,i}$ is a convex upper-bounding surrogate of $\hat{g}^{x}_{k,i}$.

Following the same procedure, we can similarly construct the upper-bounding surrogate $\hat{g}^{y,\textrm{UB}}_{k,i}$. Hence, the two-dimensional correlation term $\left|\mathbf{a}_k^{H}(\theta_k,\phi_k)\mathbf{a}_i(\theta_i,\phi_i)\right|^{2}$ admits the following upper bound:
\begin{align}
\left|\mathbf{a}_k^{H}(\theta_k,\phi_k)\mathbf{a}_i(\theta_i,\phi_i)\right|^{2}
\le \hat{g}^{x,\textrm{UB}}_{k,i}\hat{g}^{y,\textrm{UB}}_{k,i}. 
\end{align}
To avoid the direct use of this product in the SE constraint, we introduce a slack variable $\Psi_{k,i}$ and impose
\begin{align}
\hat{g}^{x,\textrm{UB}}_{k,i}\hat{g}^{y,\textrm{UB}}_{k,i} \le \Psi_{k,i}, ~\forall k,i,\; i\neq k.
\label{eq:ghat_2d2}
\end{align}
Here, both $\hat{g}^{x,\textrm{UB}}_{k,i}$ and $\hat{g}^{y,\textrm{UB}}_{k,i}$ are convex, since each is a finite sum of exponential terms with convex arguments. Although the product of two convex functions is not convex in general, $\hat{g}^{x,\textrm{UB}}_{k,i}\hat{g}^{y,\textrm{UB}}_{k,i}$ can be rewritten as a finite sum of exponential terms with convex arguments and nonnegative coefficients, and is therefore convex. Hence, the left-hand side of \eqref{eq:ghat_2d2} is convex, and \eqref{eq:ghat_2d2} defines a convex constraint that can be efficiently handled within the proposed optimization framework.

By applying the upper-bounding slack variable $\Psi_{k,i}$ to the inter-user correlation term, the achievable SE in \eqref{rkgk} can be lower bounded as
\begin{align}
    \tilde{R}_{k} &\geq \log_2\!\Bigg(\!1\!+\!\frac{p_k\Big(\frac{K_R}{K_R+1}N_xN_y\!+\!\frac{1}{K_R+1}\Big)}{\sum_{i \neq k}p_i\Big(\!\frac{K_R}{K_R+1}\frac{\Psi_{k,i}}{N_xN_y} \!+\! \frac{1}{K_R+1}\!\Big) \!+\!\sigma^2\frac{\|\mathbf{q}-\mathbf{u}_k\|^{\alpha}}{\beta}}\!\Bigg) \nonumber \\
    &\geq \log_2\!\Bigg(\!1\!+\!2\upsilon_k\sqrt{p_k\Big(\frac{K_R}{K_R\!+\!1}N_xN_y\!+\!\frac{1}{K_R\!+\!1}\Big)} \nonumber \\ 
    &-\upsilon_k^2\bigg(\sum_{i \neq k}p_i\Big(\frac{K_R}{K_R\!+\!1}\frac{\Psi_{k,i}}{N_xN_y} \!+\! \frac{1}{K_R\!+\!1}\Big) \!+\! \sigma^2\frac{\|\mathbf{q}\!-\!\mathbf{u}_k\|^{\alpha}}{\beta}\!\bigg)\!\!\Bigg) \nonumber \\ 
    &\triangleq \underline{R}_k,
\end{align}
where $\pmb{\upsilon}\!\triangleq\!\{\upsilon_k,\forall k\}$ denotes the set of auxiliary variables introduced by the quadratic transform \cite{Shen18}. Since $\log_2(1+z)$ is monotonically increasing in $z$, a lower bound on the fractional SINR term directly induces a lower bound on the achievable SE. Therefore, by applying the quadratic transform to the SINR term, we obtain the tractable lower bound $\underline{R}_k$. Moreover, for any fixed $\mathbf{q}$ and $\{\Psi_{k,i}\}$, this bound is tight when
\begin{align}
\!\!\!\!\upsilon_k^*\!=\!
\frac{\sqrt{p_k\Big(\!\frac{K_R}{K_R+1}N_xN_y\!+\!\frac{1}{K_R+1}\!\Big)}}
{\sum_{i \neq k}p_i\Big(\!\frac{K_R}{K_R+1}\frac{\Psi_{k,i}}{N_xN_y}\!+\! \frac{1}{K_R+1}\!\Big) \!+\!\sigma^2\frac{\|\mathbf{q}-\mathbf{u}_k\|^{\alpha}}{\beta}}, ~\forall k.
\label{vk_opt}
\end{align}
Consequently, constraint \eqref{constrk10} is conservatively replaced by
\begin{align}
\underline{R}_k \geq \eta, ~\forall k. \label{rate_fp}   
\end{align}

Accordingly, \textbf{(SP1)} is successively approximated by the following convex subproblem:
\begin{align} 
\textbf{(SP1-1):} ~\max_{\mathbf{q},~\pmb{\Lambda},~\eta} & ~~~~~~~~~~~~~~~~~~ \eta \nonumber  \\
\textrm{subject to} & ~~~~\eqref{qmax},~\eqref{s'},~\eqref{t},~\eqref{s_taylor},~\eqref{t'_taylor}, \nonumber \\
&~~~~\eqref{xi'},~\eqref{xi_taylor},~\eqref{rho_taylor},~\eqref{eq:ghat_2d2},~\eqref{rate_fp}, \nonumber
\end{align}
where $\pmb{\Lambda} \!\triangleq\! \{\underline{\mu}_{k}, \overline{\mu}_{k},
\underline{\xi}_{k}, \overline{\xi}_{k}, \forall k;
\underline{\kappa}_{k,i}, \overline{\kappa}_{k,i}, \Psi_{k,i}, \forall k,i, i \!\neq\! k;
\rho\}$ denotes the set of slack variables.

\vspace{-2mm}
\subsection{Beamforming Optimization With Instantaneous CSI}

In the second stage, the UAV position obtained from the first stage is fixed, and the beamforming design is optimized based on the instantaneous CSI $\{\mathbf{h}_k,\forall k\}$ fed back from the GNs. The resulting beamforming subproblem is given by
\begin{align} 
\textbf{(SP2):} ~\max_{\mathbf{W},~\eta} & ~~~~~~~~ \eta \nonumber  \\
\textrm{subject to} & ~~~~ \eqref{constrk},~\eqref{pmax}. \nonumber
\end{align}
Problem \textbf{(SP2)} is still nonconvex because the SE constraints in \eqref{constrk} contain coupled SINR expressions, where each beamformer affects not only the desired signal power but also the multiuser interference. This coupling makes the direct optimization of $\mathbf{W}$ difficult. To obtain a more tractable formulation, we adopt the weighted minimum mean square error (WMMSE) approach \cite{Shi11}, which reformulates the SE maximization problem into an equivalent MSE-based problem by introducing auxiliary equalizer and weight variables. The key idea is to exploit the SE--MSE equivalence, so that the difficult SINR-coupled SE expressions can be handled through an alternative optimization framework. Although the reformulated problem is still not jointly convex, it becomes much more amenable to block coordinate optimization.

Specifically, GN $k$ employs a scalar equalizer $e_k$ to detect its intended stream $s_k$ from the received signal $y_k$, i.e.,
\begin{align}
\hat{s}_k=(e_k)^*y_k.
\end{align}
Accordingly, the MSE for decoding stream $s_k$ is defined as
\begin{align}
    \varepsilon_k &\triangleq \mathbb{E}\{|\hat{s}_k-s_k|^2\} \nonumber \\ 
    &=|(e_k)^*|^2T_k-2\Re\!\left((e_k)^*\mathbf{h}^H_k\mathbf{w}_k\right)+1,
    \label{mse1}
\end{align}
where $T_k\!=\!\sum_{i=1}^K|\mathbf{h}^H_k\mathbf{w}_i|^2\!+\!\sigma^2$.

For a given beamforming matrix $\mathbf{W}$, the MSE in \eqref{mse1} is minimized by the MMSE equalizer. By setting $\frac{\partial \varepsilon_k}{\partial e_k}\!=\!0$, the optimal equalizer is obtained as
\begin{align}
    (e_k)^{\textrm{MMSE}}=\mathbf{h}^H_k\mathbf{w}_k(T_k)^{-1}.
    \label{mmse1} 
\end{align}
Substituting \eqref{mmse1} into \eqref{mse1} yields the resulting MMSE:
\begin{align}
    (\varepsilon_k)^{\textrm{MMSE}}
    = 1-\frac{|\mathbf{h}^H_k\mathbf{w}_k|^2}{T_k}
    =\frac{1}{1+\gamma_k},
\end{align}
where $\gamma_k$ denotes the SINR of GN $k$. Hence, the SINR can be equivalently written as
\begin{align}
    \gamma_k= \frac{1}{(\varepsilon_k)^{\textrm{MMSE}}}-1,
\end{align}
and the achievable SE is expressed as
\begin{align}
    R_k= -\log_2\!\left((\varepsilon_k)^{\textrm{MMSE}}\right).
\end{align}
This relation is the key step of the WMMSE method: maximizing the achievable SE is equivalent to minimizing the corresponding MMSE. In other words, instead of directly optimizing the SE expression, we may optimize an equivalent error metric.

To further obtain a form suitable for iterative optimization, we introduce an auxiliary positive weight variable $u_k$ for each GN and define the weighted MSE as
\begin{align}
    \zeta_k = u_k\varepsilon_k-\ln(u_k).
\end{align}
Together with the receive equalizer $e_k$, the auxiliary variable $u_k$ establishes an affine equivalence between the minimum weighted MSE and the achievable SE. In particular, for any fixed $\mathbf{W}$, minimizing $\zeta_k$ over $(u_k,e_k)$ yields
\begin{align}
    (\zeta_k)^{\textrm{MMSE}} = \min_{u_k,e_k} \zeta_k = 1- R_k \ln 2,
\end{align}
where the optimal weight is obtained from $\frac{\partial \zeta_k}{\partial u_k}\!=\!0$ as
\begin{align}
    u_k^*=\frac{1}{(\varepsilon_k)^{\textrm{MMSE}}}. \label{uopt}
\end{align}
Hence, once the equalizer and the weight are optimally chosen, the weighted-MSE quantity is affinely equivalent to the achievable SE. As a result, minimizing the weighted MSE is fully equivalent to maximizing the SE.

Using this equivalence, \textbf{(SP2)} can be reformulated as the following WMMSE problem:
\begin{align} 
\textbf{(SP2-1):} ~\min_{\mathbf{W},~\mathbf{u},~\mathbf{e},~\eta^{\textrm{MMSE}}} & ~~~~~~~~ \eta^{\textrm{MMSE}} \nonumber  \\
\textrm{subject to} ~~~& ~~~~ \frac{(\zeta_k)^{\textrm{MMSE}}\!-\!1}{\ln2} \!\le\! \eta^{\textrm{MMSE}},~\forall k, \\
& ~~~~~~~ \eqref{pmax}, \nonumber
\end{align}
where $\mathbf{u}\!\triangleq\!\{{u}_k, \forall k\}$ and $\mathbf{e}\!\triangleq\!\{{e}_k, \forall k\}$. This reformulation is equivalent to the original max--min SE beamforming problem. Specifically, since $\frac{(\zeta_k)^{\textrm{MMSE}}\!-\!1}{\ln2}\!=\!-R_k$, minimizing the maximum weighted-MSE expression across GNs is equivalent to maximizing the minimum achievable SE among all GNs. Although \textbf{(SP2-1)} is still not jointly convex in all variables, it is convex with respect to each block of variables when the others are fixed. Therefore, \textbf{(SP2-1)} can be efficiently solved by alternately updating $\mathbf{e}$ and $\mathbf{u}$ in closed form and optimizing $\mathbf{W}$ via convex solvers until convergence.

\vspace{-2mm}
\subsection{Proposed Algorithm}

The proposed two-stage procedure is summarized in Algorithm~\ref{Alg1}. In the first stage, the UAV position is iteratively refined via SCA: the auxiliary variables for the quadratic transform are first updated in closed form, and the resulting convex subproblem \textbf{(SP1-1)} is then solved by a standard convex solver such as CVX \cite{CVX}. In the second stage, with the UAV position fixed and the instantaneous CSI acquired, the beamforming design is optimized using the WMMSE approach: the equalizers and weights are updated in closed form, and the beamforming matrix is then obtained by solving \textbf{(SP2-1)}. These updates are repeated until convergence.

\begin{algorithm}[h]
    \caption{Proposed Algorithm} \label{Alg1} \small
    1:$~~$Initialize $\mathbf{q}^{(0)}$ and $\epsilon > 0$\\
    2:$~~$Initialize $\mathbf{W}^{(0)}$ using MRT based on the mean channel \\
    3:$~~$Set $r\!=\!0$ and calculate $\eta^{(r)} \!=\! \min_{k \in \mathcal{K}} \underline{R}_{k}$\\
    4:$~~$\textbf{repeat}\\
    5:$~~~~$Update $r \leftarrow r+1$  \\
    6:$~~~~$Update $\eta^{\textrm{old}} \leftarrow \eta^{(r-1)}$\\
    7:$~~~~$Update $\{\pmb{\tau}^{(r)},\pmb{\upsilon}^{(r)}\}$ using \eqref{tau_opt} and \eqref{vk_opt} \\
    8:$~~~~$Find $\mathbf{q}^{(r)}$ from \textbf{(SP1-1)} for $\{\mathbf{q}^{(r-1)},\pmb{\tau}^{(r)},\pmb{\upsilon}^{(r)}\}$\\
    9:$~~~~$Calculate $\eta^{(r)} \!=\! \min_{k \in \mathcal{K}} \underline{R}_{k}$\\
    10:$~$\textbf{until} $|\eta^{(r)}\!-\!\eta^{\textrm{old}}|<\epsilon$ \\
    11:$~$Set $\mathbf{q}^{*} \leftarrow \mathbf{q}^{(r)}$ and obtain the instantaneous CSI via feedback \\
    12:$~$Set $r\!=\!0$ and calculate $\eta^{(r)} \!=\! \min_{k \in \mathcal{K}} R_{k}$ \\
    13:$~$\textbf{repeat}\\
    14:$~~~$Update $r \leftarrow r+1$  \\
    15:$~~~$Update $\eta^{\textrm{old}} \leftarrow \eta^{(r-1)}$\\
    16:$~~~$Update $\{\mathbf{e}^{(r)},\mathbf{u}^{(r)}\}$ using \eqref{mmse1} and \eqref{uopt} \\
    17:$~~~$Find $\mathbf{W}^{(r)}$ from \textbf{(SP2-1)} for $\{\mathbf{q}^{*},\mathbf{e}^{(r)},\mathbf{u}^{(r)},\mathbf{W}^{(r-1)}\}$ \\
    18:$~~~$Calculate $\eta^{(r)} \!=\! \min_{k \in \mathcal{K}} R_{k}$\\
    19:$~$\textbf{until} $|\eta^{(r)}\!-\!\eta^{\textrm{old}}|<\epsilon$ 
\end{algorithm}

\begin{remark}[Convergence Analysis] \label{Remark1}
Algorithm~\ref{Alg1} consists of two sequential stages. In Stage 1, each SCA iteration solves the convex approximation \textbf{(SP1-1)}, where the nonconvex terms are replaced by conservative surrogates that are locally tight at the expansion point. As a result, the objective sequence generated by the SCA iterates in the UAV-positioning stage is monotonically non-decreasing. Moreover, the feasible set is bounded by the UAV-position constraint \eqref{qmax}, and the achievable objective value is upper-bounded over this feasible region by a finite constant \cite{Bertsekas99}. Therefore, the Stage-1 sequence converges.

In Stage 2, for the fixed UAV position, the WMMSE-based iterative updates generate a monotonically non-increasing weighted-MSE objective sequence. With MMSE-optimal updates of the equalizer and weight variables, this weighted-MSE reformulation is equivalent to the original max--min SE beamforming problem. Therefore, the corresponding minimum achievable SE sequence is monotonically non-decreasing. Since the achievable SE is upper-bounded under the transmit-power constraint \eqref{pmax}, the Stage-2 sequence also converges. Therefore, both the SCA-based UAV-positioning stage and the WMMSE-based beamforming stage generate convergent objective sequences.
\end{remark}

\begin{remark}[Computational Complexity] \label{Remark2}
The computational complexity of the proposed algorithm is evaluated using the worst-case complexity of interior-point methods~\cite{Ben-Tal01}. For an objective accuracy of $\epsilon$, such methods generally require $\mathcal{O}\!\left(\sqrt{N_V}\log(1/\epsilon)\right)$ iterations, where each iteration incurs a complexity of $\mathcal{O}(N_V^3)$ and $N_V$ is the number of optimization variables. Therefore, the overall complexity of the proposed algorithm is on the order of $\emph{O}\Big(\!\big(I_1(K^2)^{3.5}\!+\!I_2(KN_xN_y)^{3.5}\big) \log(1/\epsilon)\!\Big)$, where $I_1$ and $I_2$ are the numbers of iterations until convergence in Stages 1 and 2, respectively. Hence, the proposed method has polynomial-time complexity with respect to the problem size and is computationally tractable for practical implementation~\cite{Leiserson}.
\end{remark}

\vspace{-2mm}
\section{Simulation Results and Discussions}

\begin{table}[ht]\vspace{-4mm}
\begin{center}
\caption{Parameter Setup} \footnotesize
\begin{tabular}{ll} \hline 
Description & Value \\ \hline \hline
Number of GNs & $K$ = 6 \\
Number of antennas & $N_x$ = $N_y$ = 4 \\
Operating region & $\{x_{\textrm{max}},x_{\textrm{min}}\}$ = $\{400,100\}$ m \\
& $\{y_{\textrm{max}},y_{\textrm{min}}\}$ = $\{400,100\}$ m \\
& $\{z_{\textrm{max}},z_{\textrm{min}}\}$ = $\{100,30\}$ m \\
Maximum transmit power & $P_{\textrm{max}}$ =  30 dBm\\
Carrier frequency & $f_c$ = $3.5$ GHz \\
Antenna spacing & $d$ = $\frac{\lambda}{2}$ = 4.29 cm \\
Path-loss exponent & $\alpha$ = $2.3$ \\
Reference channel gains & $\beta$ = $-43$ dB \\
Rician $K$-factor & $K_{R}$ = 10 dB\\
Noise power & $\sigma^{2}$ = $-100$ dBm \\
Convergence threshold & $\epsilon$ = 10$^{-3}$ \\
\hline
\end{tabular} 
\label{table1}
\end{center}\vspace{-4mm}
\end{table}

For performance evaluation, we use the simulation parameters summarized in Table~\ref{table1}, which follow representative settings commonly adopted in prior UAV communication studies~\cite{Liu23,Chou25}. The GNs are randomly distributed within the $x$--$y$ region specified in Table~\ref{table1} on the horizontal plane ($z_k\!=\!0$), while the UAV is constrained to move within the corresponding feasible 3D region. For each channel realization, the air-to-ground channel is generated according to the Rician fading model described in \eqref{cch}. Based on this setup, we compare the proposed scheme with the following benchmark schemes in order to evaluate the performance gain brought by correlation-aware UAV positioning and the subsequent beamforming refinement.
\begin{enumerate}
    \item \emph{Taylor-Linearized Alternating Positioning and Beamforming (TLAPB) \cite{Xu25}}: UAV positioning and beamforming are iteratively optimized based on mean channel, where the UAV position is refined via local first-order Taylor-based SCA and the beamforming variables are updated alternately until convergence. After convergence, the instantaneous CSI at the obtained UAV location is fed back, and the transmit beamformer is re-optimized using the updated channel information.

    \item \emph{Interference-Unaware Positioning (IUP) \cite{Zhu22,Liu23,Yu22,Chou25}}: The UAV position is optimized based on long-term channel statistics for desired-signal enhancement under an idealized decoupled beam model that ignores inter-user interference. After the UAV position is obtained, the transmit beamformer is optimized using the instantaneous CSI at the resulting UAV location.

    \item \emph{Correlation-Frozen Positioning (CFP)}: The UAV position is optimized based on long-term channel statistics with the inter-user steering-correlation term $\left|\mathbf{a}_k^{H}\mathbf{a}_i\right|^{2}$ fixed at the value computed from the initial UAV location. After the UAV position is obtained, the transmit beamformer is optimized using the instantaneous CSI at the resulting UAV location.

    \item \emph{Fixed-Center Positioning (FCP)}: The UAV position is fixed at the center of the operating region, and the transmit beamformer is optimized using the instantaneous CSI.
\end{enumerate}
For fair comparison, the initial UAV position for all iterative schemes except FCP, including TLAPB, IUP, CFP, and the proposed scheme, is set to the centroid of the GN distribution. TLAPB and IUP serve as representative baselines inspired by prior studies, enabling comparison with existing UAV positioning frameworks that do not explicitly optimize the UAV location based on the proposed correlation-aware interference structure. By contrast, CFP and FCP are restricted variants of the proposed framework, introduced to isolate the gains due to dynamic correlation-aware positioning updates and to UAV positioning optimization itself, respectively.

\begin{figure*}[ht!]
  \begin{center}
    \subfigure[3D visualization of optimized UAV positions under different schemes.]{
      \includegraphics[width=0.29\linewidth]{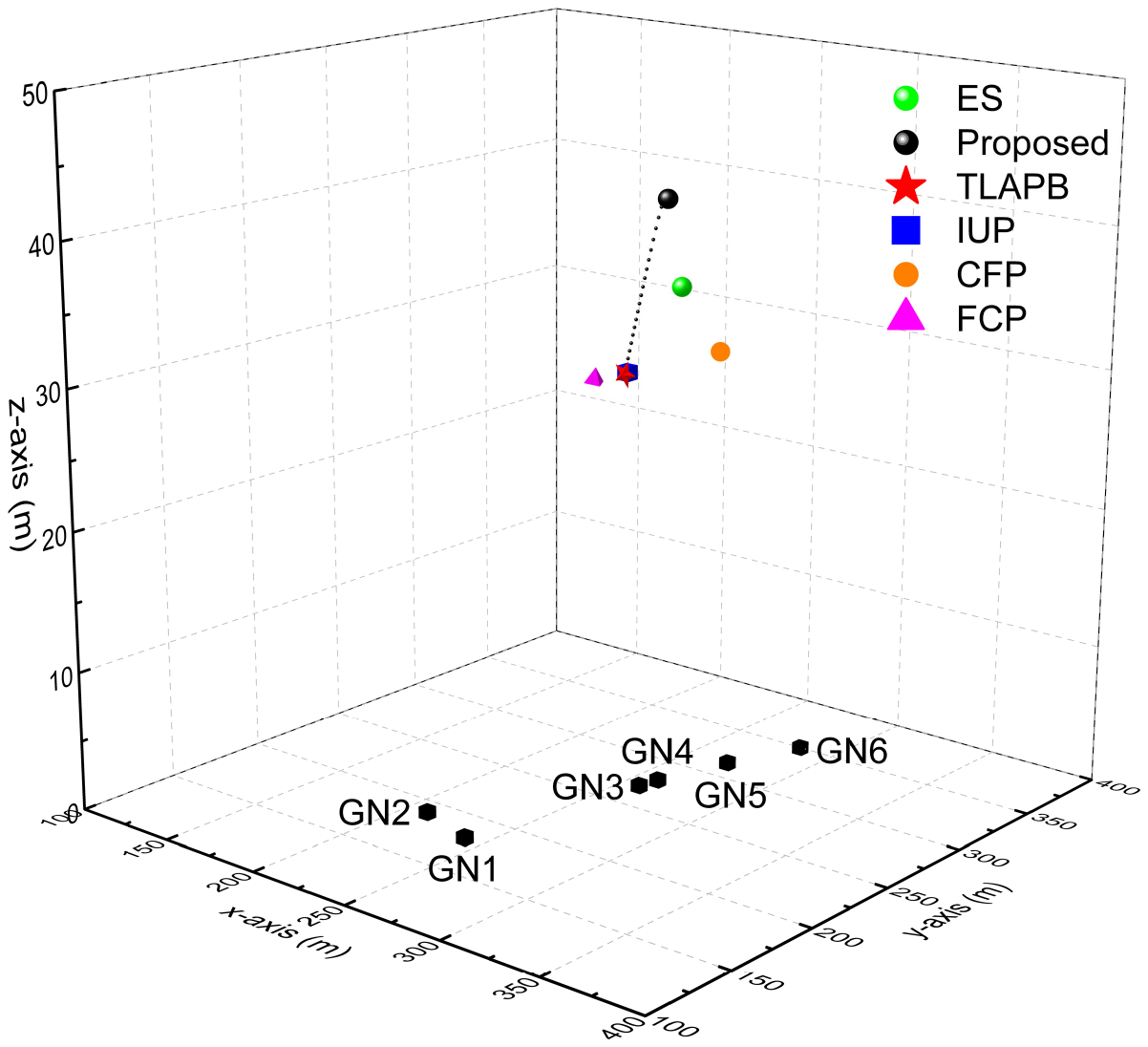}\label{R1-1}
    }
    \subfigure[2D visualization of optimized UAV positions on the achievable minimum SE heatmap.]{
      \includegraphics[width=0.315\linewidth]{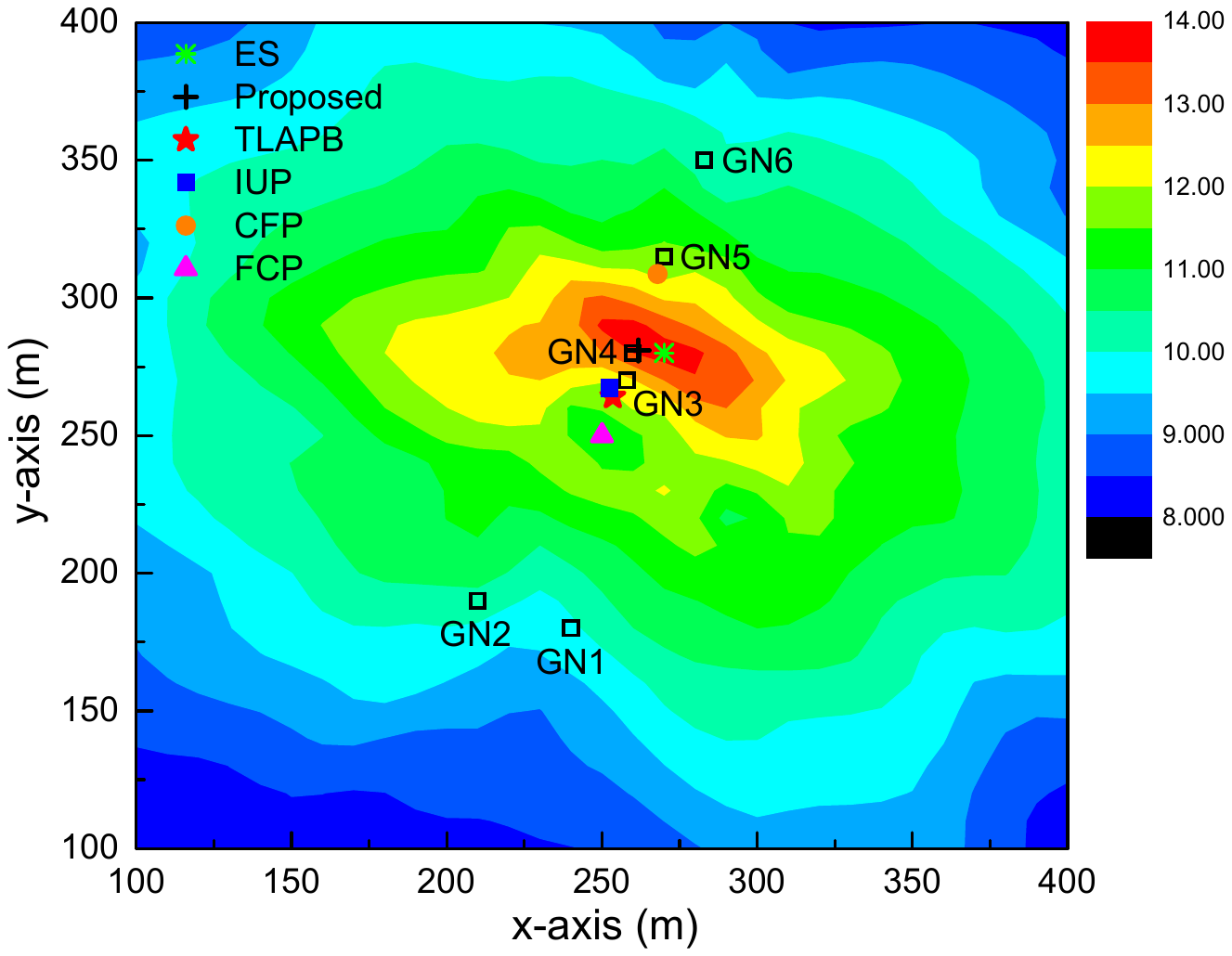}\label{R1-2}
    }
    \subfigure[Convergence behavior and maximum spatial correlation of the proposed scheme.]{
      \includegraphics[width=0.315\linewidth]{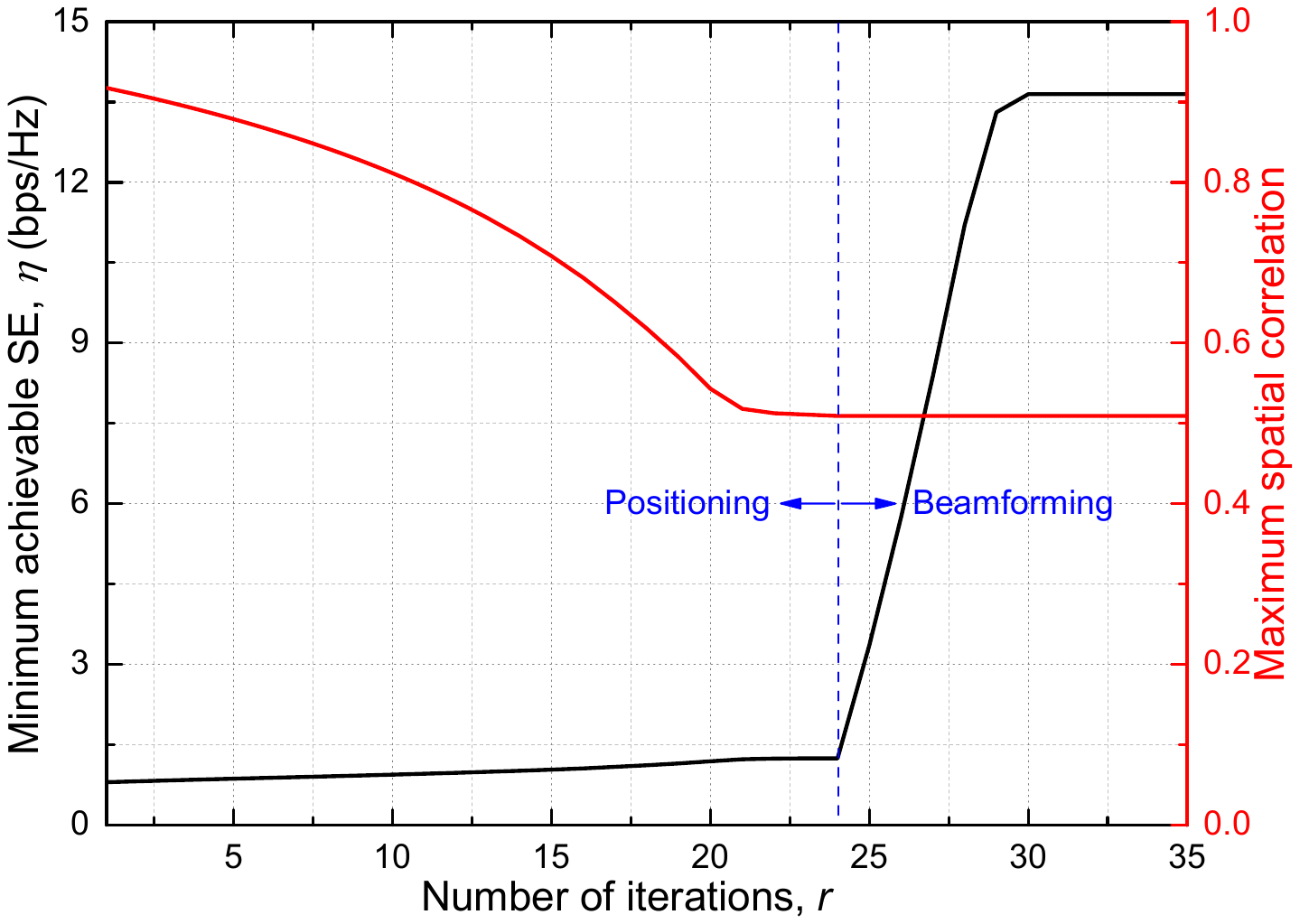}\label{R1-3}
    }
    \subfigure[Beam gain pattern for GN 5 at the initial UAV position.]{
      \includegraphics[width=0.315\linewidth]{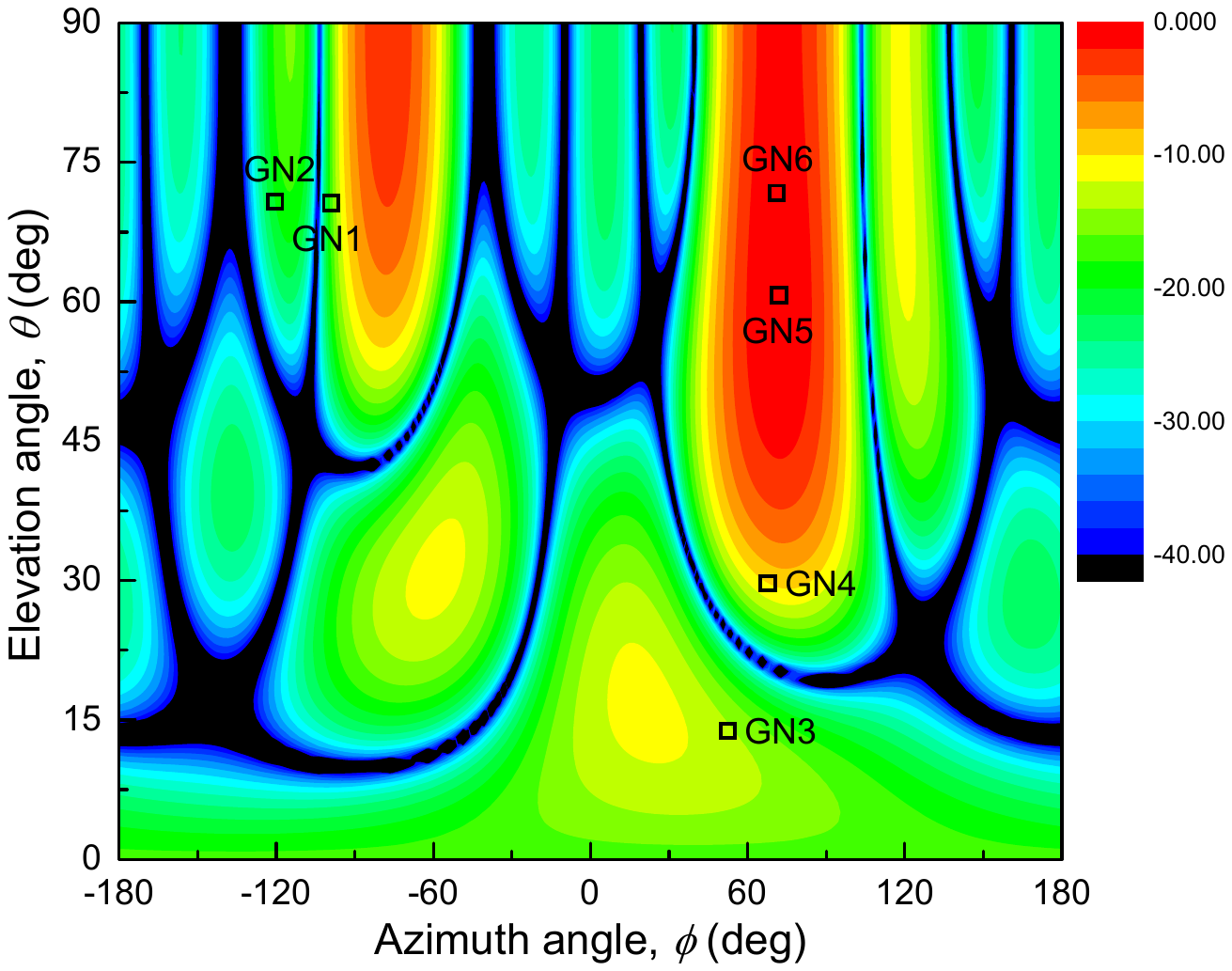}\label{R1-4}
    }
    \subfigure[Beam gain pattern for GN 5 after UAV positioning optimization.]{
      \includegraphics[width=0.315\linewidth]{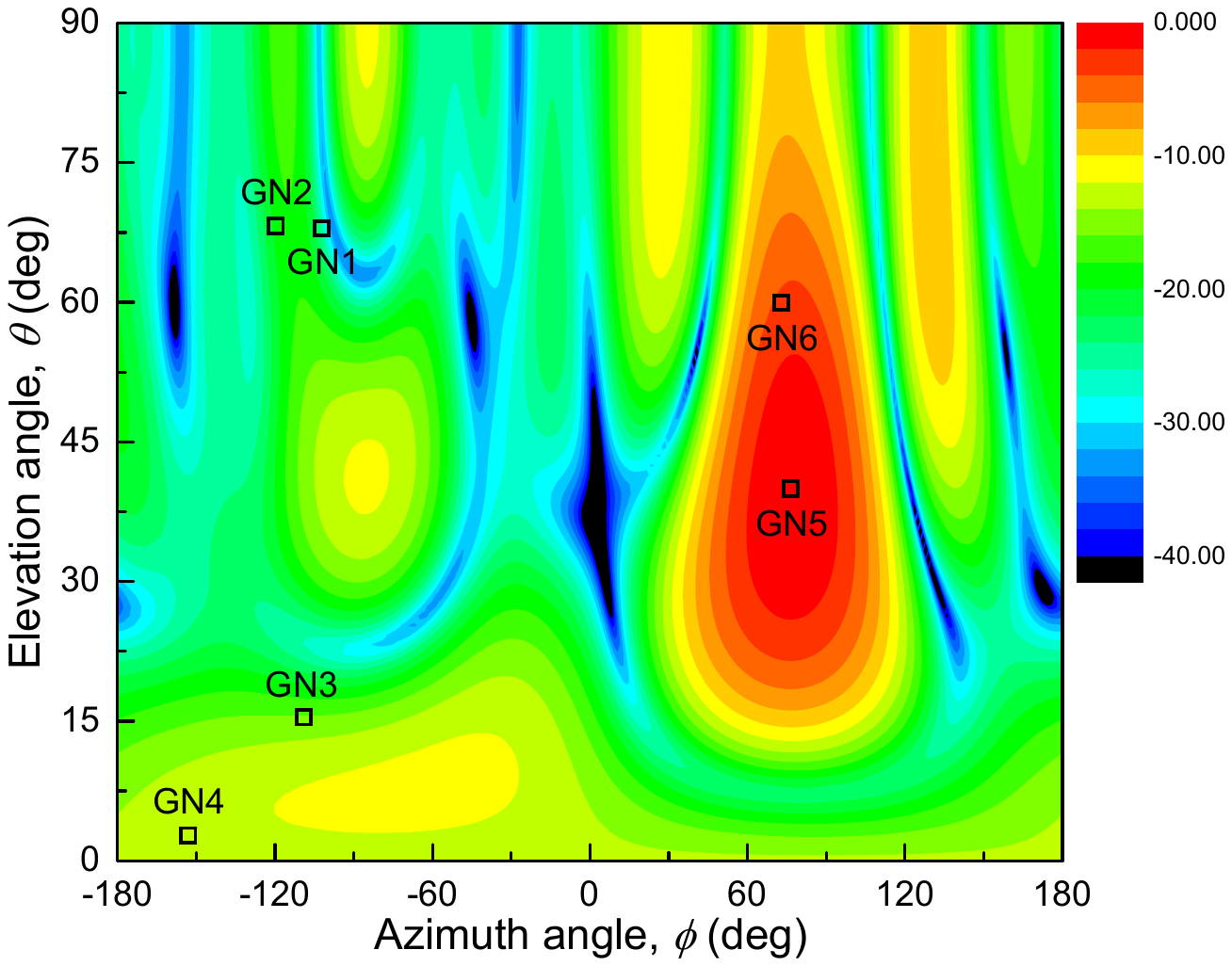}\label{R1-5}
    }
    \subfigure[Beam gain pattern for GN 5 after UAV positioning and beamforming optimization.]{
      \includegraphics[width=0.315\linewidth]{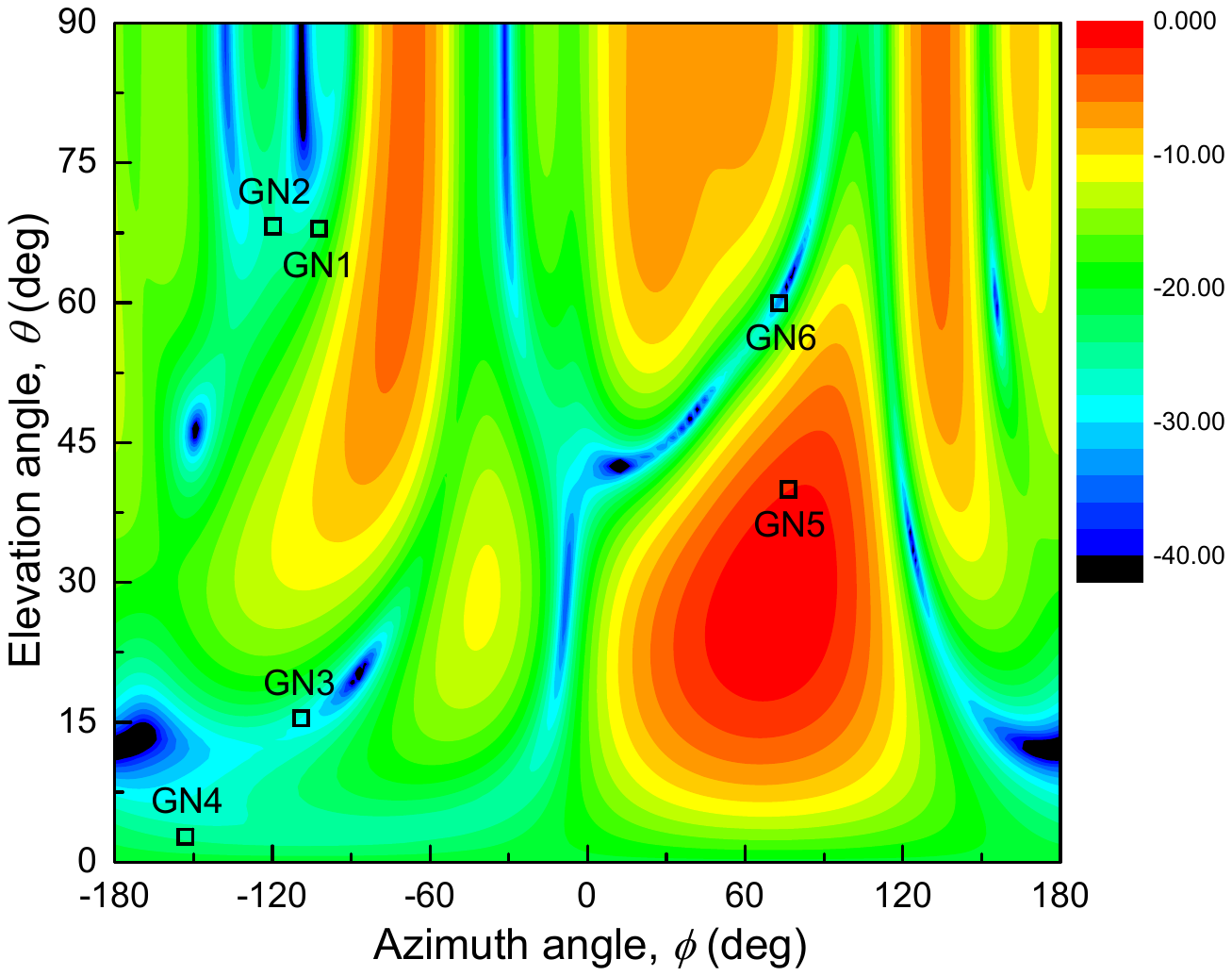}\label{R1-6}
    }
  \end{center} \vspace{-3mm}
\caption{Optimized UAV positions under different schemes, convergence behavior of the proposed scheme, and beam gain patterns for GN 5 at different optimization stages.} \vspace{-5mm}
\label{R1}
\end{figure*}

Fig. \ref{R1} illustrates the optimized UAV positions under different schemes, the convergence behavior of the proposed algorithm, and the beam gain patterns for a representative user at different optimization stages. For the given spatial distribution of GNs, it reveals the resource-allocation tendency of the proposed scheme, showing how the joint design adaptively adjusts the UAV position and beamforming pattern according to the underlying user geometry.

Fig. \ref{R1-1} marks the optimized UAV positions obtained by the baseline schemes and by an exhaustive search (ES), where the feasible region is uniformly discretized along each dimension and the beamforming matrix is optimized via WMMSE at every candidate location. For the proposed scheme, the full convergence path from the initial UAV position to the final solution is also illustrated. Both the proposed scheme and ES place the UAV at an altitude higher than the minimum allowable altitude, $z_{\textrm{min}}$ = 30 m, indicating that the optimal placement is determined not only by link-distance reduction but also by the need to improve spatial user separation and thereby mitigate inter-user interference. By contrast, TLAPB and IUP converge to nearly the same location, but for different reasons. IUP is mainly driven by desired-link enhancement and thus tends to approach a point that improves the link conditions of the most weakly served GNs, which in this geometry lies close to the centroid-based initial position. On the other hand, TLAPB updates the UAV position through local first-order Taylor approximation of the angle-dependent channel, which restricts the search to a neighborhood of the initial point and leads to convergence to a nearby suboptimal solution. Meanwhile, CFP moves away from the initial point toward a location that better separates GN 5 and GN 6, which exhibit the strongest steering correlation at the initial UAV position. This is because CFP fixes the inter-user steering-correlation term $\left|\mathbf{a}_k^{H}\mathbf{a}_i\right|^{2}$ at its initial value, so the positioning update is guided by the initially observed correlation pattern rather than the actual geometry-dependent correlation at the updated UAV position. In contrast, FCP keeps the UAV fixed at the center of the operating region.

Fig. \ref{R1-2} presents the minimum achievable SE as a heatmap over the horizontal plane. The proposed scheme and ES are both located in the darkest red region, indicating that they identify the most favorable UAV placement for max–min SE enhancement. CFP moves in a direction that mitigates the initially strong correlation between GN 5 and GN 6, but because the correlation term is fixed at the initial location, it cannot fully reflect the actual correlation changes caused by UAV repositioning, resulting in lower performance than the proposed scheme and ES. TLAPB and IUP choose relatively balanced locations while taking all GNs into account, but because they do not explicitly optimize the steering-correlation structure, they fail to reach the highest-performance region. FCP, which keeps the UAV fixed at the center of the operating region, shows the lowest performance.

Fig. \ref{R1-3} illustrates the convergence behavior of the proposed scheme by showing the minimum achievable SE and the corresponding maximum spatial correlation, defined as $\max_{k\neq i}\frac{\left|\mathbf{a}_k^{H}\mathbf{a}_i\right|^{2}}{\|\mathbf{a}_k\|^{2}\|\mathbf{a}_i\|^{2}}$, over the iterations. As the iterations proceed, the minimum achievable SE monotonically increases, whereas the maximum spatial correlation monotonically decreases. This indicates that the positioning stage progressively refines the UAV location to enlarge the angular separation among users and reduce the strongest normalized steering correlation. Once this more favorable spatial structure is established, the subsequent beamforming stage exploits instantaneous CSI at the optimized UAV location and yields a much sharper performance improvement by more effectively suppressing the remaining inter-user interference. The proposed scheme reduces the maximum spatial correlation to 0.509, while the benchmark schemes remain around 0.9, implying that they are much less effective in improving angular separability among users. Although ES further lowers this value to 0.467, it incurs a computational time of 45136 s. In contrast, the proposed scheme achieves a very similar UAV placement and a comparable level of maximum spatial correlation in only 123 s, showing that it can approximate the ES solution with substantially lower computation time.

\begin{figure*}[ht!]
  \begin{center}
    \subfigure[$\eta$ vs. $N_x$ = $N_y$.]{
      \includegraphics[width=0.365\linewidth]{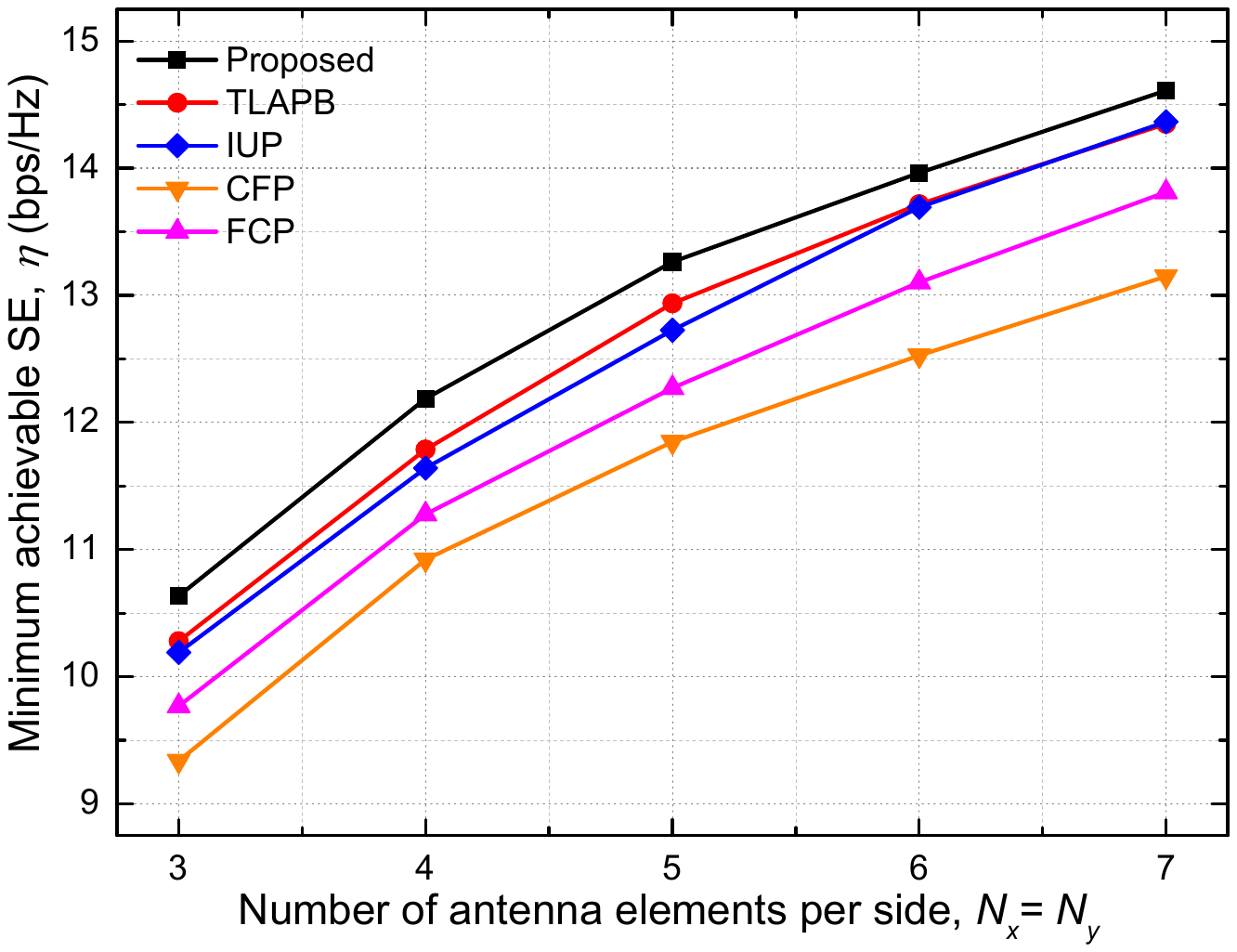}\label{R2-1}
    }
    \subfigure[$\eta$ vs. $K$.]{
      \includegraphics[width=0.365\linewidth]{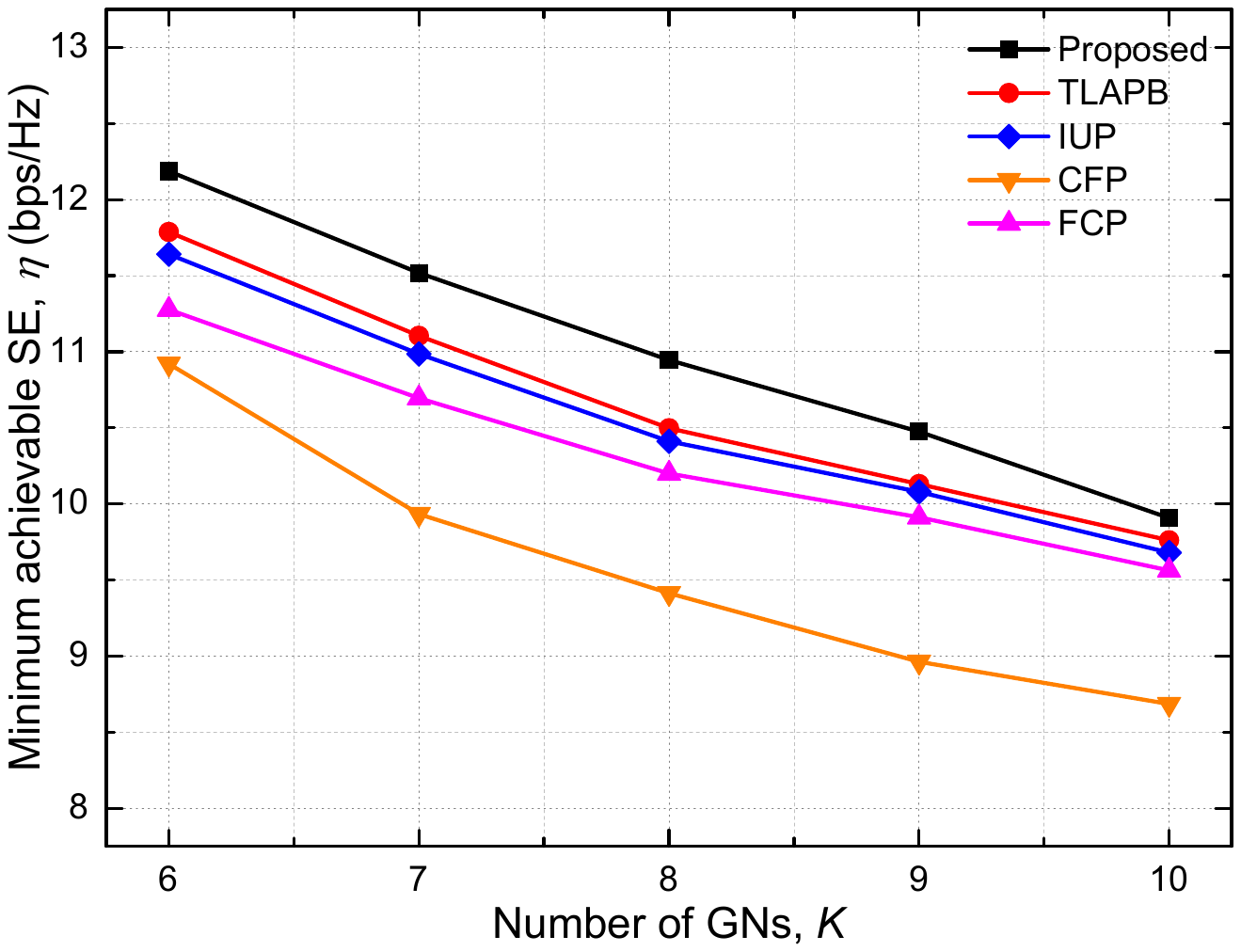}\label{R2-2}
    }
    \subfigure[$\eta$ vs. $K_R$.]{
      \includegraphics[width=0.365\linewidth]{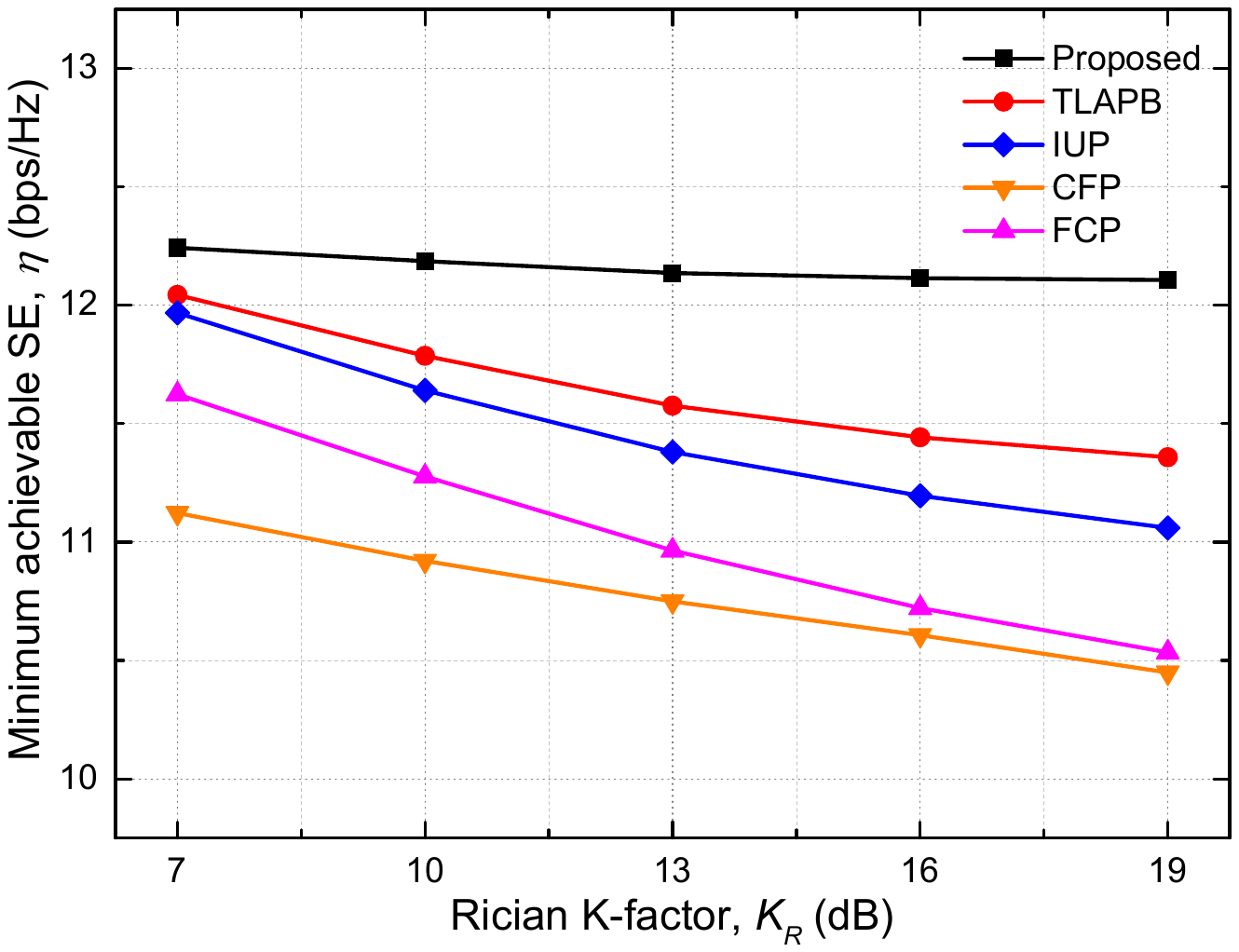}\label{R2-3}
    }
    \subfigure[$\eta$ vs. $P_{\textrm{max}}$.]{
      \includegraphics[width=0.365\linewidth]{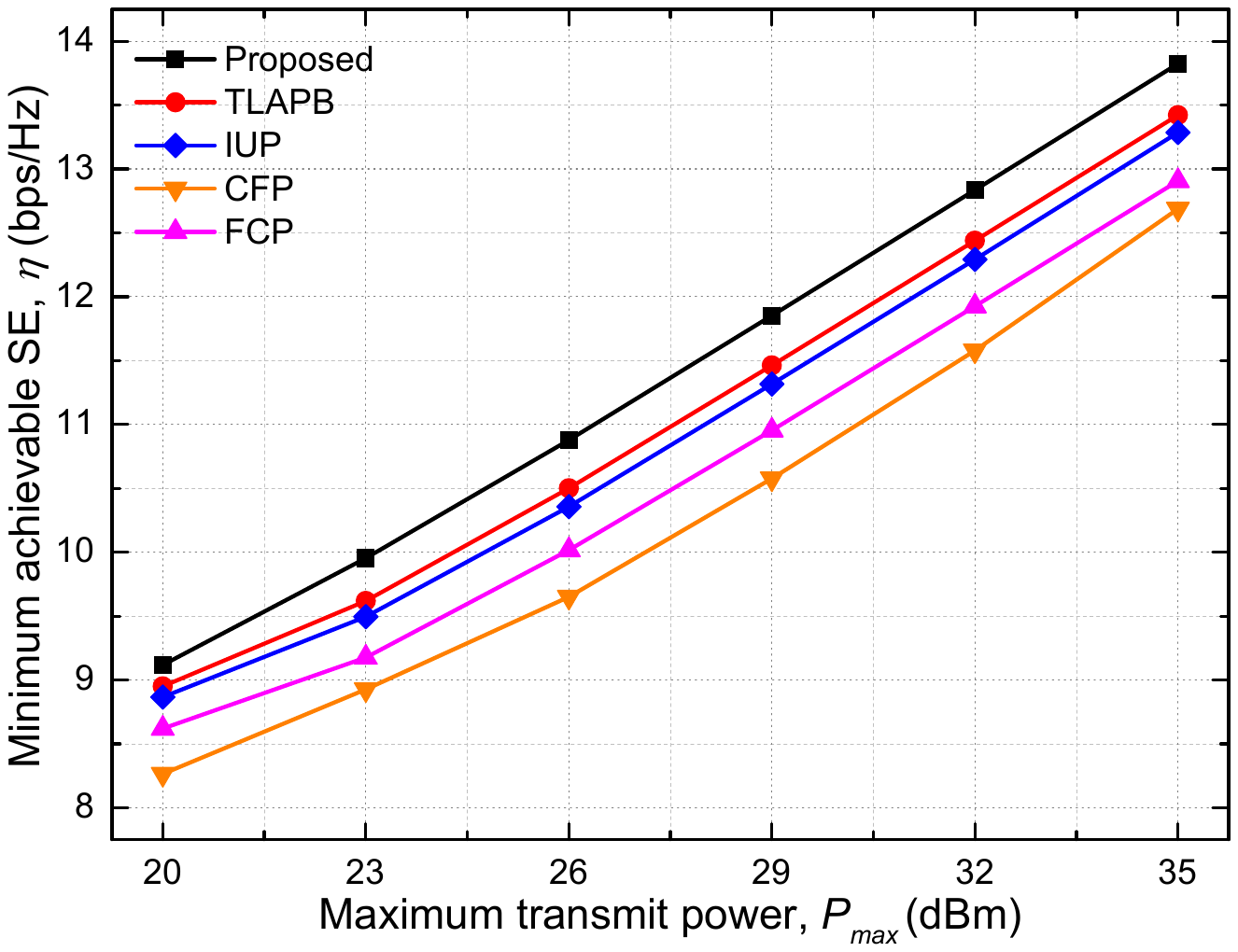}\label{R2-4}
    }
  \end{center} \vspace{-3mm}
\caption{Performance comparison under varying system parameters.} \vspace{-5mm}
\label{R2} 
\end{figure*}

Figs. \ref{R1-4}--\ref{R1-6} depict, for the proposed scheme, the normalized beam pattern of the beamformer associated with GN 5, which is the worst-performing user when the beamforming weights are initialized by MRT and the UAV is placed at the centroid of the GN distribution. Specifically, for each scanning angle pair $(\theta,\phi)$ with respect to the UAV, the beam gain is computed as $G_5(\theta,\phi)\!=\!|\mathbf{a}(\theta,\phi)^H\mathbf{w}_5|^2$, normalized by its maximum value, and plotted in dB. This beam pattern illustrates how strongly the beam assigned to GN 5 is formed in each direction, where a redder color indicates a stronger beam response. Hence, the figure allows us to examine whether a high gain is formed toward GN 5 while the leakage toward other GNs remains sufficiently suppressed. At the initial UAV position in Fig. \ref{R1-4}, both the directions of GN 5 and GN 6 appear in red. This indicates that, from the UAV’s perspective, the two GNs have similar angular directions, so that the beam intended for GN 5 also produces a strong response toward GN 6. As a result, significant leakage toward GN 6 is observed. After optimizing the UAV position in Fig. \ref{R1-5}, the angular separation among GNs is improved, and although GN 6 is not completely separated from GN 5, the response in the direction of GN 6 becomes visibly weaker than at the initial position. Finally, after the beamforming weights are further optimized in Fig. \ref{R1-6}, the gain toward GN 5 is maintained while the response toward GN 6 is further reduced, appearing as a much darker color. These results clearly show that UAV positioning first improves the spatial geometry by enhancing angular user separation, and the subsequent beamforming optimization then exploits this improved geometry to more effectively suppress the remaining inter-user leakage.

Fig. \ref{R2} compares the minimum achievable SE of the proposed scheme and the benchmark schemes under different system parameters, where each result is averaged over Monte Carlo realizations with randomly generated GN distributions and channel realizations. Fig. \ref{R2-1} shows that the proposed scheme achieves the best performance over the whole range of the number of antenna elements $N_x\!=\!N_y$, followed by TLAPB, IUP, FCP, and CFP in most cases. The performance gain comes from its ability to jointly exploit distance-dependent path-loss reduction and angle-dependent channel separability. Among the benchmark schemes, TLAPB generally provides the best performance because, starting from the centroid-based initial placement, its local first-order Taylor-based updates tend to remain in a relatively balanced region and thus converge to a nearby suboptimal solution that avoids excessively weak user links. However, TLAPB still fails to reach the globally more favorable region because it does not explicitly optimize the steering-correlation structure. IUP follows next, since its desired-signal-oriented positioning improves the link conditions of the worst-served GNs, although it does not explicitly account for the geometry-dependent interference structure. CFP performs the worst because it optimizes the UAV position using the steering-correlation values fixed at the initial location, and thus cannot capture the actual channel-separation improvement caused by UAV movement. Consequently, it may move the UAV toward a location biased to only a subset of GNs, leaving the remaining users as bottlenecks and even yielding lower performance than FCP. In contrast, FCP avoids such biased repositioning by simply fixing the UAV at the center of the operating region. Furthermore, the performance of all schemes increases with the number of antenna elements, since a larger array provides stronger beamforming gain and finer angular resolution. Consequently, both desired-signal enhancement and interference suppression become more effective. However, when the array size becomes sufficiently large, the additional benefit of geometry-aware UAV positioning becomes less pronounced because beamforming itself can already provide substantial spatial separation among users. In this regime, the performance bottleneck shifts from user separability to weakest-link enhancement, which may explain why the desired-signal-oriented placement of IUP slightly outperforms TLAPB at $N_x\!=\!N_y\!=\!7$.

In Fig. \ref{R2-2}, the performance of all schemes decreases as the number of GNs $K$ increases, because it becomes increasingly difficult for the UAV to provide both strong links and sufficient channel separation for a larger number of users. This degradation is more pronounced for CFP, since its frozen-correlation design tends to bias the UAV position toward only a subset of users, leaving the remaining users more likely to become bottlenecks as $K$ grows. In Fig. \ref{R2-3}, when the Rician $K$-factor is small, the stronger NLoS component makes channel separability less dependent on the LoS geometry, so the benefit of the proposed angle-aware UAV positioning is relatively less pronounced, and the performance gap among the schemes remains moderate. As the $K$-factor increases, the LoS component becomes more dominant, and the channel separability is increasingly determined by the angular geometry between the UAV and the GNs. Accordingly, the advantage of the proposed scheme becomes more pronounced, resulting in a larger performance gain over the benchmark schemes. In Fig. \ref{R2-4}, the performance of all schemes increases with the maximum transmit power $P_{\textrm{max}}$, since a larger power budget strengthens the desired signal level for all users. Overall, the proposed scheme consistently achieves the best performance over all tested parameter ranges by jointly accounting for distance-dependent path-loss effects and angle-dependent channel separability. This confirms the effectiveness of the proposed correlation-aware geometric design and its robustness under various system configurations.

\vspace{-2mm}
\section{Conclusions}

This paper studied joint UAV placement and beamforming for multiuser UAV-assisted downlink communications. Unlike existing designs in which UAV placement is not directly guided by the geometry-dependent interference induced by inter-user steering correlation, we developed a correlation-aware geometric framework that incorporates steering-vector correlation into the positioning stage. This enables the UAV location to be optimized with explicit consideration of inter-user interference structure. To make the resulting design tractable, we introduced a safeguarded Gaussian surrogate together with a tractable SE formulation for the positioning stage based on long-term channel statistics, and integrated them into a sequential two-stage optimization framework for UAV positioning and beamforming design. Simulation results showed that the proposed framework achieves notable gains in the minimum user SE over benchmark schemes by enlarging angular separation among users and reducing inter-user interference. More importantly, the results reveal a key design insight: in multiuser downlink systems, UAV positioning should be viewed not only as a means of improving link strength, but also as a geometric mechanism for mitigating multiuser interference through improved user separation. These findings highlight the broader role of UAV positioning in future interference-aware multi-antenna UAV communication systems.

\vspace{-2mm}

\end{document}